\newcolumntype{P}[1]{>{\centering\arraybackslash}p{#1}}
\newcolumntype{M}[1]{>{\centering\arraybackslash}m{#1}}
\newcolumntype{C}[1]{>{\centering\arraybackslash}c{#1}}
\definecolor{cellgray}{RGB}{220,220,220}
\newcommand\NoThen{\renewcommand\algorithmicthen{}}
\newcommand{\ignore}[1]{}
\newcommand{\la}{\lambda}
\newcommand{\A}{\mathcal{A}}
\newcommand{\negl}{\mathrm{negl}}
\newcommand{\Cl}{\mathsf{Clock}}
\newcommand{\Com}{\mathsf{Com}}
\newcommand{\rec}{\mathsf{Rec}}
\newcommand{\set}[1]  {\left\{#1\right\}}
\newcommand{\mb}{\mathbb}
\newcommand{\mr}{\mathrm}
\newcommand{\mc}{\mathcal}
\newcommand{\tc}[1]{\tilde{\mathcal{#1}}}
\newcommand{\mbf}{\mathbf}
\newcommand{\Ch}{\mathcal{C}}
\newtheorem{theorem}{Theorem}
\newtheorem{corollary}{Corollary}
\newtheorem{definition}{Definition}
\theoremstyle{definition}
\theoremstyle{remark}
\newenvironment{boxfig}[2]{%
     \begin{figure}[h!]
     \newcommand{\FigCaption}{#1}
     
     \begin{center}
         \begin{tabular}{@{}|@{~~}l@{~~}|@{}}
           \hline
           \rule[-1.5ex]{0pt}{1ex}\begin{minipage}[b]{.95\linewidth}
             \vspace{1ex}
             \smallskip
             }{%
           \end{minipage}\\
           \hline
         \end{tabular}
      \caption{\FigCaption}
     \end{center}
     \vspace{-0.5cm}
   \end{figure}
}
\newif\ifextended
\begin{document}
\title{D-DEMOS: A distributed, end-to-end verifiable, internet voting system} 

\author{
  \IEEEauthorblockN
  {
    Nikos Chondros\IEEEauthorrefmark{1},
    Bingsheng Zhang\IEEEauthorrefmark{2},
    Thomas Zacharias\IEEEauthorrefmark{1},
    Panos Diamantopoulos\IEEEauthorrefmark{1},
    Stathis Maneas\IEEEauthorrefmark{3},\\
    Christos Patsonakis\IEEEauthorrefmark{1},
    Alex Delis\IEEEauthorrefmark{1},
    Aggelos Kiayias\IEEEauthorrefmark{1} and
    Mema Roussopoulos\IEEEauthorrefmark{1}
  }
  \IEEEauthorblockA{\IEEEauthorrefmark{1}
    Department of Informatics and Telecommunications,
    University of Athens,
    Greece 
  }
  \IEEEauthorblockA{\IEEEauthorrefmark{2}
    Schools of Computing and Communications, University of Lancaster, UK 
  }
  \IEEEauthorblockA{\IEEEauthorrefmark{3}
    Department of Computer Science, University of Toronto, Canada 
  }
}
\maketitle

\begin{abstract}
  E-voting systems have emerged as a powerful technology for improving democracy by reducing election cost, increasing voter participation, and even allowing voters to directly verify the entire election procedure.
Prior internet voting systems have single points of failure, which may result in the compromise of availability, voter secrecy, or integrity of the election results.

In this paper, we present the design, implementation, security analysis, and evaluation of D-DEMOS, a complete e-voting system that is distributed, privacy-preserving and end-to-end verifiable.
Our system includes a fully asynchronous vote collection subsystem that provides immediate assurance to the voter her vote was recorded as cast, without requiring cryptographic operations on behalf of the voter. 
We also include a distributed, replicated and fault-tolerant Bulletin Board component, that stores all necessary election-related information, and allows any party to read and verify the complete election process.
Finally, we also incorporate trustees, i.e., individuals who control election result production while guaranteeing privacy and end-to-end-verifiability as long as their strong majority is honest.
 
Our system is the first e-voting system whose voting operation is human verifiable, i.e., a voter can vote over the web, even when her web client stack is potentially unsafe, without sacrificing her privacy, and still be assured her vote was recorded as cast. Additionally, a voter can outsource election auditing to third parties, still without sacrificing privacy. Finally, as the number of auditors increases, the probability of election fraud going undetected is diminished exponentially.

We provide a model and security analysis of the system.
We implement a prototype of the complete system, we measure its performance experimentally, and we demonstrate its ability to handle large-scale elections.

\end{abstract}

\section{Introduction}
E-voting systems have emerged as a powerful technology to improve the election process. 
Kiosk-based e-voting systems, e.g., \cite{chaum2001surevote,chaum-esorics-2005,fisher-wote-2006,chaum2008scantegrity,benaloh2013starvote,culnane2014peered} allow the tally to be produced faster, but  require the voter's physical presence at the booth. 
Internet e-voting systems, e.g., \cite{CGS-eurocrypt-1997,adida-helios-2008,clarkson2008civitas,kutylowski2010SCV,gjosteen2013norway,zagorski2013remotegrity,chaum2001surevote,chaum2008scantegrity,zagorski2013remotegrity,DEMOS}, however, allow voters to cast their votes remotely. 
Internet voting systems have the potential to improve the democratic process by reducing election costs and by increasing voter participation for social groups that face considerable physical barriers and overseas voters. 
In addition, several internet voting systems~\cite{adida-helios-2008,kutylowski2010SCV,zagorski2013remotegrity,DEMOS} allow voters and auditors to directly verify the integrity of the entire election process, providing \emph{end-to-end verifiability}.
This is a highly desired property that has emerged in the last decade, where voters can be assured that no entities, even the election authorities, have manipulated  the election result.
Despite their potential, existing internet voting systems suffer from single points of failure, which may result in the compromise of voter secrecy, service availability, or integrity of the result~\cite{chaum2001surevote,chaum-esorics-2005,fisher-wote-2006,chaum2008scantegrity,benaloh2013starvote,CGS-eurocrypt-1997,adida-helios-2008,clarkson2008civitas,kutylowski2010SCV,gjosteen2013norway,zagorski2013remotegrity,DEMOS}.
\\\indent
In this paper, we present the design and prototype implementation of D-DEMOS, a distributed, end-to-end verifiable internet voting system, with no single point of failure during the election process (that is, besides setup). 
We set out to overcome two major limitations in existing internet voting systems. 
The first, is their dependency on centralized components. 
The second is their requirement for the voter to run special software on their devices, which processes cryptographic operations. 
Overcoming the latter allows votes to be cast with a greater variety of client devices, such as feature phones using SMS, or untrusted public web terminals.
Our design is inspired by the novel approach proposed in~\cite{DEMOS}, where the voters are used as a source of randomness to challenge the zero-knowledge protocols, which are used to enable end-to-end verifiability.
\\\indent
We design a distributed vote collection subsystem that is able to collect votes from voters and assure them their vote was recorded as cast, without requiring any cryptographic operation from the client device. This allows voters to vote via SMS, a simple console client over a telnet session, or a public web terminal, while preserving their privacy.
At election end time, vote collectors agree on a single set of votes asynchronously, and upload it to a second distributed component, the Bulletin Board. This is a replicated service that publishes its data immediately and makes it available to the public forever.
Our third distributed subsystem, \emph{trustees} are a set of persons entrusted with secret keys that can unlock information from the bulletin board. We share these secret keys among them, making sure that an honest majority is required to uncover information from the BB. Trustees interact with the BB once the votes are uploaded to it, to produce and publish the final election tally.
\\\indent
The resulting voting system is end-to-end verifiable, by the voters themselves, as well as third-party auditors; all this while preserving voter privacy. 
A voter can provide an auditor information from her ballot; the auditor can read from the distributed BB and verify the complete process, including the correctness of the election setup by election authorities. 
Additionally, as the number of auditors increases, the probability of election fraud going undetected diminishes exponentially.
\\\indent
Finally, we implement a prototype of the complete D-DEMOS voting system. 
We measure its performance experimentally, under a variety of election settings, demonstrating its ability to handle thousands of concurrent connections, and thus manage large-scale elections.
\\\indent
To summarize, we make the following contributions:
\begin{itemize}
 \item We present the world's first complete, state-of-the-art, end-to-end verifiable, distributed voting system with no single point of failure besides setup. 
 \item The system allows voters to verify their vote was tallied-as-intended without the assistance of special software or trusted devices, and external auditors to verify the correctness of the election process. 
 Additionally, the system allows voters to delegate auditing to a third party auditor, without sacrificing their privacy.
 \item We provide a model and a security analysis of our voting system.
 \item We implement a prototype of the integrated system, measure its performance and demonstrate its ability to handle large scale elections.
\end{itemize}

\section{Related work}
\noindent\textbf{Voting systems.}
Several end-to-end verifiable e-voting systems have been introduced, e.g. the kiosk-based systems~\cite{chaum-esorics-2005,fisher-wote-2006,chaum2008scantegrity,benaloh2013starvote} and the internet voting systems~\cite{adida-helios-2008,kutylowski2010SCV,zagorski2013remotegrity,DEMOS}. 
In all these works, the Bulletin Board (BB) is a single point of failure and has to be trusted.
\\ \indent 
Dini presents a distributed e-voting system, which however is not end-to-end verifiable~\cite{dini2003secure}.  
In~\cite{culnane2014peered}, there is a distributed BB  implementation, also handling vote collection, according to the design of the vVote end-to-end verifiable e-voting system~\cite{culnane2014vVote}, which in turn is an adaptation of the Pr{\^{e}}t {\`{a}} Voter e-voting system~\cite{chaum-esorics-2005}.
In~\cite{culnane2014peered}, the proper operation of the BB during ballot casting requires a trusted device for signature verification. 
In contrast, our vote collection subsystem is done so that correct execution of ballot casting can be ``human verifiable'', i.e., by simply checking the validity of the obtained receipt. 
Additionally, our vote collection subsystem is fully asynchronous, always deciding with exactly $n-f$ inputs, while in~\cite{culnane2014peered}, the system uses a synchronous approach based on the FloodSet algorithm from~\cite{Lynch:1996:DA} to agree on a single version of the state.
\\\indent
DEMOS~\cite{DEMOS} is an end-to-end verifiable e-voting system, which introduces the novel idea 
of extracting the challenge of the zero-knowledge proof protocols from the voters' random choices; we leverage this idea in our system too.
However, DEMOS uses a centralized Election Authority (EA), which maintains all secrets throughout the entire election procedure, collects votes, produces the result and commits to verification data in the BB.
Hence, the EA is a single point of failure, and because it knows the voters' votes, it is also a critical privacy vulnerability.
In this work, we address these issues by introducing distributed components for vote collection and result tabulation, and we do not assume any trusted component during election.  
Additionally, DEMOS does not provide any recorded-as-cast feedback to the voter, whereas our system includes such a mechanism.
\ifextended
Besides, DEMOS encodes the $i$-th option to $N^{i-1}$, where $N$ is greater than the total number of voters, and this option encoding has to fit in the message space of commitments. 
Therefore, the size of the underlying elliptic curve grows linearly with the number of options, which makes DEMOS not scalable with respect to the number of options. 
In this work, we overcome this problem by using a different scheme for option encoding commitments. 
Moreover, the zero-knowledge proofs in DEMOS have a big soundness error, and it decreases the effectiveness of zero-knowledge application; whereas, in our work, we obtain nearly optimal overall zero-knowledge soundness.
\else
Moreover, in our design, the committed verification data in the BB support auditing with asymptotically lower computational cost w.r.t. the number of options, compared to DEMOS.
Finally, the zero-knowledge proofs in DEMOS have a large soundness error which decreases the effectiveness of zero-knowledge application, while in this work we obtain nearly optimal overall zero-knowledge soundness.
\fi
\\\indent
Furthermore, none of the above works provide any performance evaluation results.

\par\noindent
\textbf{State Machine Replication.}
Castro et al.~\cite{castro-osdi-1999} introduce a practical Byzantine Fault Tolerant replicated state machine protocol. 
In the last several years, several protocols for Byzantine Fault Tolerant state machine replication have been introduced to improve performance (\cite{cowling2006hq,kotla2007zyzzyva}), robustness (\cite{aublin2013rbft, clement2009making}), or both (\cite{clement2009upright}). 
Our system does not use the state machine replication approach, as it would be more costly. 
Each of our vote collection nodes can validate the voter's requests on its own. 
In addition, we are able to process multiple different voters' requests concurrently, without enforcing the total ordering inherent in replicated state machines. Finally, we do not want voters to use special client-side software to access our system.

\section{System description}
\subsection{Problem definition and goals}
We consider an \emph{election} with a single \emph{question} and $m$ \emph{options}, for a voter population of size $n$, where voting takes place between a certain \emph{begin} and \emph{end} time (the \emph{voting hours}), and each voter may select a single \emph{option}.

Our major goals in designing our voting system are three. 
First, it has to be end-to-end verifiable, so that anyone can verify the complete election process. 
Additionally, voters should be able to outsource auditing to third parties, without revealing their voting choice.
Second, it has to be fault-tolerant, so that an attack on system availability and correctness is hard. 
Third, voters should not have to trust the terminals they use to vote, as they may be malicious; voters should be assured their vote was recorded, without disclosing any information on how they voted to the malicious entity controlling their device.

%
%

\subsection{System overview}
\label{sec:sysoverview}
We employ an election setup component in our system, which we call the Election Authority (EA), to alleviate the voter from employing any cryptographic operations. 
The EA is tasked to initialize all remaining system components, and then gets destroyed to preserve privacy.
The \emph{Vote Collection} (VC) subsystem collects the votes from the voters during election hours, and assures them their vote was \emph{recorded-as-cast}.
Our \emph{Bulletin Board} (BB) subsystem, which is a public repository of all election related information, is used to hold all ballots, votes, and the result, either in encrypted on plain form, allowing any party to read from it and verify the complete election process.
The VC subsystem uploads all votes to BB at election end time.
Finally, our design includes \emph{trustees}, who are persons entrusted with managing all actions needed until result tabulation and publication, including all actions supporting end-to-end verifiability.
Trustees hold the keys to uncover any information hidden in the BB, and we use threshold cryptography to make sure a malicious minority cannot uncover any secrets or corrupt the process.



Our system starts with the EA generating initialization data for every component of our system. 
The EA encodes each election option, and \emph{commits} to it using a commitment scheme, as described below.
It encodes the $i$-th option as $\vec{e}_i$, a unit vector where the $i$-th element is $1$ and the remaining elements are $0$. 
The commitment of an option encoding is a vector of (lifted) ElGamal ciphertexts~\cite{ElGamal} over elliptic curve, that element-wise encrypts a unit vector. Note that this commitment scheme is also additively homomorphic, i.e. the commitment of $e_a + e_b$ can be computed by component-wise multiplying the corresponding commitments of $e_a$ and $e_b$.
The EA then creates a $\mathsf{vote code}$ and a $\mathsf{receipt}$ for each option.
Then, the EA prepares one ballot for each voter, with two functionally equivalent parts. 
Each part contains a list of options, along with their corresponding vote codes and receipts.
We consider ballot distribution to be outside the scope of this project, but we do assume ballots, after being produced by the EA, are distributed in a secure manner to each voter; thus only each voter knows the vote codes listed in her ballot.
We make sure vote codes are not stored in clear form anywhere besides the voter's ballot.

Our \emph{Vote Collection} (VC) subsystem collects the votes from the voters during election hours, by accepting up to one vote code from each voter. 
Our EA initializes each VC node with the vote codes and the receipts of the voters' ballots. 
However, it hides the vote codes, using a simple commitment scheme based on symmetric encryption of the plaintext along with a random salt value.
This way, each VC node can verify if a vote code is indeed part of a specific ballot, but cannot recover any vote code until the voter actually chooses to disclose it. 
Additionally, we secret-share each receipt across all VC-nodes using a $(N-f, N)$-VSS scheme with trusted dealer, making sure that a receipt can be recovered and posted back to the voter only when a strong majority of VC nodes participates successfully in our voting protocol.
With this design, our system adheres to the following contract with the voters: \emph{Any honest voter who receives a valid receipt from a Vote Collector node, is assured her vote will be published on the BB, and thus included in the election tally}.

The voter selects one part of her ballot at random, and posts her selected vote code to one of the VC nodes. 
When she receives a receipt, she compares it with the one on her ballot corresponding to the selected vote code.
If it matches, she is assured her vote was recorded and will be included in the election tally.
The other part of her ballot, the one not used for voting, will be used for auditing purposes.
This design is essential for verifiability, in the sense that the EA cannot predict which part a voter may use, and the unused part will betray a malicious EA with $1/2$ probability per audited ballot.

Our second distributed subsystem is the Bulletin Board (BB), which is a replicated service of isolated nodes.
Each BB node is initialized from the EA with vote codes and associated option encodings in committed form (again, for vote code secrecy), and each BB node provides public access to its stored information.
On election end time, VC nodes run our Vote Set Consensus protocol, which guarantees all VC nodes agree on a single set of voted $\langle\mathsf{serial\textrm{-}no},\mathsf{vote\textrm{-}code}\rangle$ tuples. 
Then, VC nodes upload this set to each BB node, which in turn publishes this set once enough VC nodes provide the same set.

Our third distributed subsystem is a set of trustees, who are persons entrusted with managing all actions needed until result tabulation and publication, including all actions supporting end-to-end verifiability.
Secrets that may uncover information in the BB are shared across trustees, making sure malicious trustees under a certain threshold cannot disclose sensitive information.
We use Pedersen's Verifiable linear Secret Sharing (VSS)~\cite{vss} to split the election data among the trustees. 
In a $(k,n)$-VSS, at least $k$ shares are required to reconstruct the original data, and any collection of less than $k$ shares leaks no information about the original data. 
Moreover, Pedersen's VSS is  additively homomorphic, i.e. one can compute the share of $a+b$ by adding the share of $a$ and the share of $b$ respectively.
Using this scheme allows trustees to perform homomorphic ``addition'' on the option-encodings of cast vote codes, and contribute back a share of the opening of the homomorphic ``total''. 
Once enough trustees upload their shares of the ``total'', the election tally is uncovered and published at each BB node.

Note that, to ensure voter privacy, the system cannot reveal the content inside an option encoding commitment at any point. 
However, a malicious EA might put an arbitrary value (say $9000$ votes for option $1$) inside such a commitment, causing an incorrect tally result. 
To prevent this, we utilize the Chaum-Pedersen  zero-knowledge proof~\cite{CP}, allowing the EA to show that the content inside each commitment is a valid option encoding, without revealing its actual content.
Namely, the prover uses Sigma OR proof to show that each ElGamal ciphertext encrypts either $0$ or $1$, and the sum of all elements in a vector is $1$.
Our zero knowledge proof is organized as follows.
First, the EA posts the initial part of the proofs on the BB.
During the election, each voter's A/B part choice is viewed as a source of randomness, $0/1$, and all the voters' coins are collected and used as the challenge of our zero knowledge proof.
After that, the trustees will jointly produce the final part of the proofs and post it on the BB before the opening of the tally. 
Hence, everyone can verify those proofs on the BB. 
For notation simplicity, we omit the zero-knowledge proof components in this paper and refer the interested reader to~\cite{CP} for details.

Due to this design, any voter can read information from the BB, combine it with her private ballot, and verify her ballot was included in the tally.
Additionally, any third-party can read the BB and verify the complete election process.
As the number of auditors increases, the probability of election fraud going undetected diminishes exponentially.
For example, even if only $10$ people audit, with each one having $1 \over 2$ probability to detect ballot fraud, the probability of ballot fraud going undetected is only ${1 \over 2}^{10} = 0.00097$.
Thus, even if the EA is malicious and, e.g.,  tries to point all vote codes to a specific option, this faulty setup will be detected because of the end-to-end verifiability of the complete system.

\subsection{System and threat model}
\label{subsec:threat}
We assume a fully connected network, where each node can reach any other node with which it needs to communicate. 
The network can drop, delay, duplicate, or deliver messages out of order. 
However, we assume messages are eventually delivered, provided the sender keeps retransmitting them.
For all nodes, we make no assumptions regarding processor speeds.
We assume the clocks of VC nodes are synchronized with real time; this is needed simply to prohibit voters from casting votes outside election hours. 
Besides this, we make no other timing assumptions in our system.
We assume the EA sets up the election and is destroyed upon completion of the setup, as it does not directly interact with the remaining components of the system, thus reducing the attack surface for the privacy of the voting system as a whole.
We also assume initialization data for every system component is relayed to it via untappable channels.
We assume the adversary does not have the computational power to violate any underlying cryptographic assumptions. 
To ensure liveness, we additionally assume the adversary cannot delay communication between honest nodes above a certain threshold.
We place no bound on the number of faulty nodes the adversary can coordinate, as long as the number of malicious nodes of each
subsystem is below its corresponding fault threshold. 
We consider arbitrary (Byzantine) failures, because we expect our system to be deployed across separate administrative domains. 

\par Let $N_v$, $N_v$, and $N_t$ be the number of VC nodes, BB nodes, and trustees respectively. 
The voters are denoted by $V_\ell$, $\ell=1,\ldots,n$. We assume that there exists a \emph{global clock} variable $\Cl\in\mathbb{N}$, and that every VC node, BB node and voter $X$ is equipped with an \emph{internal clock} variable $\Cl[X]\in\mathbb{N}$. We define the two following events on the clocks:  
\begin{enumerate}[(i).]
 \item The event $\mathsf{Init}(X):$ $\mathsf{Clock}[X]\leftarrow\mathsf{Clock}$, that initializes a node $X$ by synchronizing its internal clock with the global clock. 
 \item The event $\mathsf{Inc}(i):$ $i\leftarrow i+1$, that causes some clock $i$ to advance by one time unit.
\end{enumerate}

\begin{boxfig}{\label{fig:threat} The adversarial setting for the adversary $\A$ acting upon the distributed bulletin board system.}{}
  {\it \underline{The adversarial setting.}}
\begin{enumerate}
 \item The EA  initializes every VC node,BB node, trustee of the D-DEMOS system by running $\mathsf{Init}(\cdot)$ in all clocks for synchronization. Then, EA prepares the voters' ballots and all the VC nodes', BB nodes', and trustees' initialization data. Finally, it forwards the ballots for ballot distribution to the voters $V_\ell$, $\ell=1,\ldots,n$.
 \item $\A$ corrupts a fixed subset of VC nodes, a fixed subset of BB nodes, and a fixed subset of trustees. In addition, it defines a dynamically updated subset of voters $\mathbf{V}_\mathsf{corr}$, initially set as empty. 
\item When an honest node $X$ wants to transmit a message $\mathbf{M}$ to an honest node $Y$, then it just sends $(X,\mathbf{M},Y)$ to $\A$.
 \item $\A$ may arbitrarily invoke the events $\mathsf{Inc}(\mathsf{Clock})$ or $\mathsf{Inc}(\mathsf{Clock}[X])$, for any node $X$. Moreover, $\A$ may write on the incoming network tape of any honest component node of D-DEMOS.
 \item For every voter $V_\ell$, $\A$ can choose whether it is going to include $V_\ell$ in $\mathbf{V}_\mathsf{corr}$. 
 \begin{enumerate}[(i).]
  \item If $V_\ell\in\mathbf{V}_\mathsf{corr}$, then $\A$ fully controls $V_\ell$.
  \item If $V_\ell\notin\mathbf{V}_\mathsf{corr}$, then $\A$ may initialize $V_\ell$ by running $\mathsf{Init}(V_\ell)$ only once. If this happens, then the only control of $\A$ over $V_\ell$ is $\mathsf{Inc}(\mathsf{Clock}[V_\ell])$ invocations. Upon initialization, $V_\ell$ engages in the voting protocol.
 \end{enumerate} 
\end{enumerate}
\end{boxfig}
All system particpants are aware of a value $T_\mathsf{end}$ such that for each node $X$, if $\mathsf{Clock}[X]\geq T_\mathsf{end}$, then $X$ considers that the election has ended. 
\par The adversarial setting for $\A$ upon D-DEMOS is defined in Figure~\ref{fig:threat}. The description in Figure~\ref{fig:threat} poses no restrictions on the control the adversary has over all internal clocks, or the number of nodes that it may corrupt (arbitrary denial of service attacks or full corruption of D-DEMOS nodes are possible). Therefore, it is necessary to strengthen the model so that we can perform a meaningful security analysis and prove the properties (liveness, safety, end-to-end verifability, and voter privacy) that D-DEMOS achieves in Section~\ref{sec:security_full}. Namely, we require the following: \\
\subsubsection{Fault tolerance}\label{subsubsec:tolerance}
We consider arbitrary (Byzantine) failures. For each of the subsystems, we have the following fault tolerance thresholds:

Let $N_v$, $N_v$, and $N_t$ be the number of VC nodes, BB nodes, and trustees respectively. 
\ifextended
The voters are denoted by $V_\ell$, $\ell=1,\ldots,n$. 
For each of the subsystems, we have the following fault tolerance thresholds:
\begin{itemize}
 \item 
The number of faulty VC nodes, $f_v$, is strictly less than $1/3$ of $N_v$, i.e., for fixed $f_v$: $$\boxed{N_v\geq3f_v+1.}$$
 \item 
The number of faulty BB nodes, $f_b$, is strictly less than $1/2$ of $N_v$, i.e., for fixed $f_b$: $$\boxed{N_b\geq2f_b+1.}$$
 \item 
For the trustees' subsystem, we apply $h_t$ out-of $N_t$ threshold secret sharing, where $h_t$ is the number of honest trustees, thus we tolerate $f_t=N_t-h_t$ malicious trustees.  
\end{itemize}

\subsubsection{Liveness assumptions}\label{subsubsec:assumptions}
Only for the liveness of our system, we need to ensure eventual message delivery and bounded synchronization loss. Therefore, it is necessary, to make the following assumptions: \\
\indent\textbf{Assumption I: }There exists an upper bound $\delta$ on the time that $\A$ can delay the delivery of the messages between honest nodes. Formally, when the honest node $X$ sends $(X,M,Y)$ to $\A$, if the value of the global clock is $T$, then $\A$ must write $M$ on the incoming network tape of $Y$ by the time that  
 $\mathsf{Clock}=T+\delta$.\\
\indent\textbf{Assumption II: }There exists an upper bound $\Delta$ of the drift of all honest nodes' internal clocks with respect to the global clock. Formally, we have that: $|\mathsf{Clock}[X]-\mathsf{Clock}|\leq\Delta$  for every node $X$, where $|\cdot|$ denotes the absolute value.

\else
For each of the subsystems, we have the following fault tolerance thresholds:
a) The number of faulty VC nodes, $f_v$, is strictly less than $1/3$ of $N_v$.
b) The number of faulty BB nodes, $f_b$, is strictly less than $1/2$ of $N_v$.
c) For the trustees' subsystem, we apply $h_t$ out-of $N_t$ threshold secret sharing, where $h_t$ is the number of honest trustees, thus we tolerate $f_t=N_t-h_t$ malicious trustees.  
\fi


\subsection{Election Authority}
\label{sect:EA}
EA produces the initialization data for each election entity in the setup phase. 
To enhance the system robustness, we let the EA generate all the public/private key pairs for all the system components (except voters) without relying on external PKI support.
We use zero knowledge proofs to ensure the correctness of all the initialization data produced by the EA. 

\emph{Voter ballots.} 
The EA generates one ballot $\mathsf{ballot}_\ell$ for each voter $\ell$, and assigns a unique $64$-bit $\mathsf{serial\textrm{-}no}_\ell$ to it.
As shown below, each ballot consists of two parts: A and B. 
Each part contains a list of $m$ $\langle\mathsf{vote\textrm{-}code},\mathsf{option},\mathsf{receipt}\rangle$ tuples, one tuple for each election option.
The EA generates the vote-code as a $160$-bit random number, unique within the ballot, and the receipt as $64$-bit random number.

\begin{center}
\begin{normalsize}
	\begin{tabular}{|l l l |}
		\hline
		$\mathsf{serial\textrm{-}no}_\ell$ &  &\\
		\hline
		& Part A &\\
		$\mathsf{vote\textrm{-}code}_{\ell,1}$ & $\mathsf{option}_{\ell,1}$ & $\mathsf{receipt}_{\ell,1}$\\
		$\quad\vdots$ & $\quad\vdots$ & $\quad\vdots$\\
		$\mathsf{vote\textrm{-}code}_{\ell,m}$ & $\mathsf{option}_{\ell,m}$ & $\mathsf{receipt}_{\ell,m}$\\
		\hline
		& Part B &\\
		$\mathsf{vote\textrm{-}code}_{\ell,1}$ & $\mathsf{option}_{\ell,1}$ & $\mathsf{receipt}_{\ell,1}$\\
		$\quad\vdots$ & $\quad\vdots$ & $\quad\vdots$\\
		$\mathsf{vote\textrm{-}code}_{\ell,m}$ & $\mathsf{option}_{\ell,m}$ & $\mathsf{receipt}_{\ell,m}$\\
		\hline
	\end{tabular}
	\end{normalsize}
\end{center}

\emph{BB initialization data.} The initialization data for all BB nodes is identical, and each BB node publishes its initialization data immediately. 
The BB's data is used to show the correspondence between the vote codes and their associated cryptographic payload.
This payload comprises the committed option encodings, and their respective zero knowledge proofs of valid encoding (first move of the prover), as described in section~\ref{sec:sysoverview}. 
However, the vote codes must be kept secret during the election, to prevent the adversary from ``stealing'' the voters' ballots and using the stolen vote codes to vote. 
To achieve this, the EA first randomly picks a $128$-bit key, $\mathsf{msk}$, and encrypts each $\mathsf{vote\textrm{-}code}$ using AES-128-CBC with random initialization vector (AES-128-CBC\$) encryption, denoted as $[\mathsf{vote\textrm{-}code}]_{\mathsf{msk}}$.
Each BB is given $H_{\mathsf{msk}}\leftarrow SHA256(\mathsf{msk} , \mathsf{salt}_\mathsf{msk})$ and $\mathsf{salt}_\mathsf{msk}$, where $\mathsf{salt}_\mathsf{msk}$ is a fresh $64$-bit random salt. Hence, each BB node can be assured the key it reconstructs from VC key-shares (see below) is indeed the key that was used to encrypt these vote-codes. 
\indent The rest of the BB initialization data is as follows: for each $\mathsf{serial\textrm{-}no}_\ell$, and for each ballot part, there is a \emph{shuffled} list of $\left\langle[\mathsf{vote\textrm{-}code}_{\ell,\pi_\ell^X(j)}]_{\mathsf{msk}},\mathsf{payload}_{\ell,\pi_\ell^X(j)} \right\rangle$ tuples, where $\pi_\ell^X\in S_L$ is a random permutation ($X$ is $A$ or $B$).
\ifextended
\begin{center}
\begin{normalsize}
	\begin{tabular}{|l c|}
		\hline
		\hspace{55pt}$(H_{\mathsf{msk}}, \mathsf{salt}_\mathsf{msk})$  &\\
		\hline
		\hline
		$\mathsf{serial\textrm{-}no}_\ell$  &\\
		\hline
		& Part A \\
		$[\mathsf{vote\textrm{-}code}_{\ell,\pi_\ell^A(1)}]_{\mathsf{msk}}$ & $\mathsf{payload}_{\ell,\pi_\ell^A(1)}$  \\
		\quad\quad $\vdots$ & $\vdots$\\
		$[\mathsf{vote\textrm{-}code}_{\ell,\pi_\ell^A(m)}]_{\mathsf{msk}}$ & $\mathsf{payload}_{\ell,\pi_\ell^A(m)}$  \\
		\hline
		& Part B \\
		$[\mathsf{vote\textrm{-}code}_{\ell,\pi_\ell^B(1)}]_{\mathsf{msk}}$ & $\mathsf{payload}_{\ell,\pi_\ell^B(1)}$  \\
		\quad\quad $\vdots$ & $\vdots$\\
		$[\mathsf{vote\textrm{-}code}_{\ell,\pi_\ell^B(m)}]_{\mathsf{msk}}$ & $\mathsf{payload}_{\ell,\pi_\ell^B(m)}$  \\
		\hline
	\end{tabular}
\end{normalsize}	
\end{center}
\fi
We shuffle the list of tuples of each part to ensure voter's privacy. 
This way, nobody can guess the voter's choice from the position of the cast vote-code in this list.

\emph{VC initialization data.} 
The EA uses an $(N_v-f_v,N_v)$-VSS to split $\mathsf{msk}$ and every $\mathsf{receipt}_{\ell,j}$, denoted as $(\|\mathsf{msk}\|_1,\ldots, \|\mathsf{msk}\|_{N_v})$ and $(\|\mathsf{receipt}_{\ell,j}\|_1,\ldots, \allowbreak\|\mathsf{receipt}_{\ell,j}\|_{N_v})$. 
For each $\mathsf{vote\textrm{-}code}_{\ell,j}$ in each ballot, the EA also computes $H_{\ell,j}\leftarrow SHA256(\mathsf{vote\textrm{-}code}_{\ell,j} , \mathsf{salt}_{\ell,j})$, where $\mathsf{salt}_{\ell,j}$ is a $64$-bit random number. 
$H_{\ell,j}$ allows each VC node to validate a $\mathsf{vote\textrm{-}code}_{\ell,j}$ individually (without network communication), while still keeping the $\mathsf{vote\textrm{-}code}_{\ell,j}$ secret. 
To preserve voter privacy, these tuples are also shuffled using $\pi_\ell^X$.
The initialization data for $VC_i$ is structured as below:
\begin{center}
\begin{normalsize}
	\begin{tabular}{|l c|}
	    \hline
	    \hspace{55pt}$\|\mathsf{msk}\|_i$  &\\
	    \hline
		\hline
		$\mathsf{serial\textrm{-}no}_\ell$ &  \\
		\hline
                & Part A \\
		$(H_{\ell,\pi_{\ell}^A(1)},\mathsf{salt}_{\ell,\pi_{\ell}^A(1)})$  & $\|\mathsf{receipt}_{\ell,\pi_{\ell}^A(1)}\|_i$ \\
		\quad $\vdots$ & $\vdots$ \\
	$(H_{\ell,\pi_{\ell}^A(m)},\mathsf{salt}_{\ell,\pi_{\ell}^A(m)})$ & $\|\mathsf{receipt}_{\ell,\pi_{\ell}^A(m)}\|_i$\\
                \hline
                & Part B \\
                $(H_{\ell,\pi_{\ell}^B(1)},\mathsf{salt}_{\ell,\pi_{\ell}^B(1)})$  & $\|\mathsf{receipt}_{\ell,\pi_{\ell}^B(1)}\|_i$ \\
                \quad $\vdots$ & $\vdots$ \\
        $(H_{\ell,\pi_{\ell}^B(m)},\mathsf{salt}_{\ell,\pi_{\ell}^B(m)})$ & $\|\mathsf{receipt}_{\ell,\pi_{\ell}^B(m)}\|_i$\\
		\hline
	\end{tabular}
	\end{normalsize}
\end{center}

\emph{Trustee initialization data.} The EA uses $(h_t,N_t)$-VSS to split the opening of encoded option commitments $\Com(\vec{e}_i)$, denoted as $(\left[\underline{\vec{e}_i}\right]_1,\ldots, \allowbreak\left[\underline{\vec{e}_i}\right]_{N_t})$.
The initialization data for $\mathsf{Trustee}_i$ is structured as below:
\begin{center}
\begin{normalsize}
	\begin{tabular}{|l c|}
		\hline
		$\mathsf{serial\textrm{-}no}_\ell$  &\\
		\hline
		& Part A \\
		 $\Com(\vec{e}_{\pi_\ell^A(i)})$  & $\left[\underline{\vec{e}_{\pi_\ell^A(i)}}\right]_\ell$\\
		$\cdots$ & $\cdots$\\
		\hline
		& Part B \\
	  $\Com(\vec{e}_{\pi_\ell^B(i)})$  & $\left[\underline{\vec{e}_{\pi_\ell^B(i)}}\right]_\ell$\\
		$\cdots$ & $\cdots$ \\
		\hline
	\end{tabular}
	\end{normalsize}
\end{center}

Similarly, the state of zero knowledge proofs for ballot correctness is shared among the trustees using $(h_t,N_t)$-VSS. 
Due to space limitation, we omit the detailed description here and refer the reader to~\cite{CP}.

\subsection{Vote Collectors}\label{subsec:Vcnodes}
VC is a distributed system of $N_v$ nodes, running our \emph{voting} and \emph{vote-set consensus} protocols.
VC nodes have private and authenticated channels to each other, and a public (unsecured) channel for voters.
%
\indent 
\ifextended
The algorithms implementing our \emph{voting} protocol are presented in Algorithm~\ref{alg:VC}.
For simplicity, we present our algorithms operating for a single election.
\fi
The \emph{voting} protocol starts when a voter submits a VOTE$\langle\mathsf{serial\textrm{-}no},\mathsf{vote\textrm{-}code}\rangle$
message to a VC node.
We call this node the \emph{responder}, as it is responsible for delivering the receipt to the voter.
The VC node confirms the current system time is within the defined election hours, and locates the ballot with the specified $\mathsf{serial\textrm{-}no}$. 
It also verifies this ballot has not been used for this election, either with the same or a different vote code.
Then, it compares the $\mathsf{vote\textrm{-}code}$ against every hashed vote code in each ballot line, until it locates the correct entry.
At this point, it multicasts an ENDORSE$\langle\mathsf{serial\textrm{-}no},\mathsf{vote\textrm{-}code}\rangle$ message to all VC nodes.
Each VC node, after making sure it has not endorsed another vote code for this ballot, responds with an ENDORSEMENT$\langle\mathsf{serial\textrm{-}no},\mathsf{vote\textrm{-}code}$,$\mathsf{sig_{VC_i}}\rangle$ message, where $\mathsf{sig_{VC_i}}$ is a digital signature of the specific serial-no and vote-code, with $VC_i$'s private key.
The responder collects $N_v-f_v$ valid signatures and forms a uniqueness certificate $\mathsf{UCERT}$ for this ballot. 
It then obtains, from its local database, the $\mathsf{receipt\textrm{-}share}$ corresponding to the specific vote-code.
Then, it marks the ballot as $\mathsf{pending}$ for the specific $\mathsf{vote\textrm{-}code}$.
Finally, it multicasts a 
VOTE\_P$\langle\mathsf{serial\textrm{-}no},\mathsf{vote\textrm{-}code}, \mathsf{receipt\textrm{-}share}$, $\mathsf{UCERT}\rangle$
message to all VC nodes, disclosing its share of the receipt.
In case the located ballot is marked as $\mathsf{voted}$ for the specific $\mathsf{vote\textrm{-}code}$, the VC node sends the stored $\mathsf{receipt}$ to the voter without any further interaction with other VC nodes.
\\\indent 
Each VC node that receives a VOTE\_P message, first verifies the validity of $\mathsf{UCERT}$, and validates the received $\mathsf{receipt\textrm{-}share}$ according to the verifiable secret sharing scheme used.
Then, it performs the same validations as the responder, and multicasts another VOTE\_P message (only once), disclosing its share of the receipt. 
When a node collects $h_v = N_v - f_v$ valid shares, it uses the verifiable secret sharing reconstruction algorithm to reconstruct the receipt (the secret) and marks the ballot as $\mathsf{voted}$ for the specific $\mathsf{vote\textrm{-}code}$. 
Additionally, the \emph{responder} node sends this receipt back to the voter.
\\\indent 
The formation of a valid $\mathsf{UCERT}$ gives our algorithms the following guarantees:
\begin{enumerate}[a)]
\item No matter how many responders and vote codes are active at the same time for the same ballot, if a $\mathsf{UCERT}$ is formed for vote code $vc_a$, no other uniqueness certificate for any vote code different than $vc_a$ can be formed.
\item By verifying the $\mathsf{UCERT}$ before disclosing a VC node's receipt share, we guarantee the voter's receipt cannot be reconstructed unless a valid $\mathsf{UCERT}$ is present. 
\end{enumerate}
\indent
At election end time, each VC node stops processing ENDORSE, ENDORSEMENT, VOTE and VOTE\_P messages, and follows the \emph{vote-set consensus} protocol, by performing the following steps for each registered ballot:
\ifextended
\begin{enumerate}
\item \label{bc-step-announce} 
Send ANNOUNCE$\langle\mathsf{serial\textrm{-}no},\mathsf{vote\textrm{-}code}, \mathsf{UCERT}\rangle$ to all nodes. 
The vote-code will be \emph{null} if the node knows of no vote code for this ballot.

\item \label{bc-step-wait} 
Wait for $N_v-f_v$ such messages. 
If any of these messages contains a valid vote code $vc_a$, accompanied by a valid $\mathsf{UCERT}$, change the local state immediately, by setting $vc_a$ as the vote code used for this ballot.

\item \label{bc-step-bc} 
Participate in a Binary Consensus protocol, with the subject ``Is there a valid vote code for this ballot?''. 
Enter with an opinion of $1$, if a valid vote code is locally known, or a $0$ otherwise.

\item \label{bc-step-result-0} 
If the result of Binary Consensus is $0$, consider the ballot not voted. 

\item \label{bc-step-result-1} 
Else, if the result of Binary Consensus is $1$, consider the ballot voted. 
There are two sub-cases here: 
  \begin{enumerate}[a)]
  \item \label{bc-step-result-1-known} 
  If vote code $vc_a$, accompanied by a valid $\mathsf{UCERT}$ is locally known, consider the ballot voted for $vc_a$. 

  \item \label{bc-step-result-1-unknown} 
  If, however, $vc_a$ is not known, send a RECOVER-REQUEST$\langle\mathsf{serial\textrm{-}no}\rangle$ message to all VC nodes, wait for the first valid RECOVER-RESPONSE$\langle\mathsf{serial\textrm{-}no}, vc_a, \mathsf{UCERT}\rangle$ response, and update the local state accordingly.
  \end{enumerate}
\end{enumerate}
\else
\\1) 
Send ANNOUNCE$\langle\mathsf{serial\textrm{-}no},\mathsf{vote\textrm{-}code}, \mathsf{UCERT}\rangle$ to all nodes. 
The vote-code will be \emph{null} if the node knows of no vote code for this ballot.
\\2) 
Wait for $N_v-f_v$ such messages. 
If any of these messages contains a valid vote code $vc_a$, accompanied by a valid $\mathsf{UCERT}$, change the local state immediately, by setting $vc_a$ as the vote code used for this ballot.
\\3)
Participate in a Binary Consensus protocol, with the subject ``Is there a valid vote code for this ballot?''. 
Enter with an opinion of $1$, if a valid vote code is locally known, or a $0$ otherwise.
\\4)
If the result of Binary Consensus is $0$, consider the ballot not voted. 
\\5)
Else, if the result of Binary Consensus is $1$, consider the ballot voted. 
There are two sub-cases here: 
  \\\indent a)
  If vote code $vc_a$, accompanied by a valid $\mathsf{UCERT}$ is locally known, consider the ballot voted for $vc_a$. 
  \\\indent b)
  If, however, $vc_a$ is not known, send a RECOVER-REQUEST$\langle\mathsf{serial\textrm{-}no}\rangle$ message to all VC nodes, wait for the first valid RECOVER-RESPONSE$\langle\mathsf{serial\textrm{-}no}, vc_a, \mathsf{UCERT}\rangle$ response, and update the local state accordingly.\\
\fi
\indent
Steps
\ifextended
~\ref{bc-step-announce}-\ref{bc-step-wait}
\else
1-2
\fi
ensure used vote codes are dispersed across nodes. 
Recall our receipt generation requires $N_v-f_v$ shares to be revealed by distinct VC nodes, of which at least $N_v-2f_v$ are honest. 
Note that any two $N_v-f_v$ subsets of $N_v$ have at least one honest node in common. 
Because of this, if a receipt was generated, at least one  honest node's ANNOUNCE will be processed by every honest node, and all honest VC nodes will obtain the corresponding vote code in these two steps; thus, they enter step
\ifextended
~\ref{bc-step-bc}
\else
3
\fi
with an opinion of $1$.
In this case, binary consensus is guaranteed to deliver $1$ as the resulting value (because all honest nodes share the same opinion), thus safeguarding our contract against the voters.
In any case, step
\ifextended
~\ref{bc-step-bc}
\else
3
\fi
guarantees all VC nodes arrive at the same conclusion, on whether this ballot is voted or not.
\\\indent
In the algorithm outlined above, the result from binary consensus is translated from $0$/$1$ to a status of ``not-voted'' or a unique valid vote code, in steps
\ifextended
\ref{bc-step-result-0}-\ref{bc-step-result-1}. 
\else
4-5.
\fi
The 
5b
case of this translation, in particular, requires additional justification. 
Assume, for example, that a voter submitted a valid vote code $vc_a$, but a receipt was not generated before election end time. 
In this case, an honest vote collector node $VC_i$ may not be aware of $vc_a$ at step
\ifextended
~\ref{bc-step-bc}
\else
3
\fi
, as steps
\ifextended
~\ref{bc-step-announce}-\ref{bc-step-wait}
\else
1-2
\fi
do not make any guarantees in this case. 
Thus, $VC_i$ may rightfully enter consensus with a value of $0$.  
However, when honest nodes' opinions are mixed, the consensus algorithm may produce any result.
In case the result is $1$, $VC_i$ will not possess the correct vote code $vc_a$, and thus will not be able to translate the result properly. 
This is what our recovery sub-protocol is designed for. 
$VC_i$ will issue a RECOVER-REQUEST multicast, and we claim that another honest node, $VC_h$ exists that \emph{possesses} $vc_a$ and \emph{replies} with it. 
The reason for the existence of an honest $VC_h$ is straightforward and stems from the properties of the binary consensus problem definition.
If all honest nodes enter binary consensus with the same opinion $a$, the result of any consensus algorithm is guaranteed to be $a$. 
Since we have an honest node $VC_i$, that entered consensus with a value of $0$, but a result of $1$ was produced, there has to exist another honest node $VC_h$ that entered consensus with an opinion of $1$. 
Since $VC_h$ is honest, it must \emph{possess} $vc_a$, along with the corresponding $\mathsf{UCERT}$ (as no other vote code $vc_b$ can be active at the same time for this ballot). 
Again, because $VC_h$ is honest, it will follow the protocol and \emph{reply} with a well formed RECOVER-REPLY. 
Additionally, the existence of $\mathsf{UCERT}$ guarantees that any malicious replies can be safely identified and discarded.
\\\indent 
At the end of this algorithm, each node submits the resulting set of \emph{voted} $\langle \mathsf{serial\textrm{-}no}, \mathsf{vote\textrm{-}code} \rangle$ tuples to each BB node, which concludes its operation for the specific election. 


\ifextended
\begin{algorithm}
\caption{Vote Collector algorithms}
\label{alg:VC}
\begin{algorithmic}[1]
\NoThen
\footnotesize
\Procedure{on VOTE}{$\mathsf{serial\textrm{-}no},\mathsf{vote\textrm{-}code}$} from $source$:
  \If{$SysTime()$ between $start$ and $end$}
    \State $b := $locateBallot($\mathsf{serial\textrm{-}no}$)
    \If{$b.\mathsf{status} == \mathsf{NotVoted}$}
      \State $l$ := ballot.VerifyVoteCode($\mathsf{vote\textrm{-}code}$)
      \If{$l \neq null$}
        \State $b.\mathsf{UCERT} := \{\}$ \Comment{Uniqueness certificate}
        \State sendAll(ENDORSE$\langle\mathsf{serial\textrm{-}no},\mathsf{vote\textrm{-}code}\rangle$)
        \State wait for $(N_v-f_v)$ valid replies, fill $b.\mathsf{UCERT}$
        \State $b.\mathsf{status} := \mathsf{Pending}$
        \State $b.\mathsf{used\textrm{-}vc} := \mathsf{vote\textrm{-}code}$
        \State $b.\mathsf{lrs} := \{\}$ \Comment{list of receipt shares}
        \State sendAll(VOTE\_P$\langle\mathsf{serial\textrm{-}no},\mathsf{vote\textrm{-}code}, l.\mathsf{share}\rangle$)
        \State wait for $(N_v-f_v)$ VOTE\_P messages, fill $b.\mathsf{lrs}$
        \State $b.\mathsf{receipt} := \rec(b.\mathsf{lrs})$
        \State $b.\mathsf{status} := \mathsf{Voted}$
        \State $send(source, b.\mathsf{receipt})$
      \EndIf
    \ElsIf{$b.\mathsf{status} == \mathsf{Voted}$ \textbf{AND} $b.\mathsf{used\textrm{-}vc} == \mathsf{vote\textrm{-}code}$}
      \State send ($source$, $\mathsf{ballot.receipt}$)
    \EndIf
  \EndIf
\EndProcedure
\item[]
%
\Procedure{on VOTE\_P}{$\mathsf{serial\textrm{-}no},\mathsf{vote\textrm{-}code}, \mathsf{share}, \mathsf{UCERT}$} from $source$:
   \If{$\textsf{UCERT}$ is not valid}
      \State return
   \EndIf
   \If{$SysTime()$ between $start$ and $end$}
      \State $b := $locateBallot($\mathsf{serial\textrm{-}no}$)
    \If{$b.\mathsf{status} == \mathsf{NotVoted}$}
       \State $l$ := ballot.VerifyVoteCode($\mathsf{vote\textrm{-}code}$)
       \If{$l \neq null$}
         \State $b.\mathsf{status} := \mathsf{Pending}$
         \State $b.\mathsf{used\textrm{-}vc} := \mathsf{vote\textrm{-}code}$
         \State $b.\mathsf{lrs}.\mathsf{Append}(\mathsf{share})$
         \State sendAll(VOTE\_P($\mathsf{serial\textrm{-}no},\mathsf{vote\textrm{-}code}, l.\mathsf{share}$) )
       \EndIf
    \ElsIf{$b.\mathsf{status} == \mathsf{Voted}$ \textbf{AND} $b.\mathsf{used\textrm{-}vc} == \mathsf{vote\textrm{-}code}$}
        \State $b.\mathsf{lrs}.\mathsf{Append}(\mathsf{share})$
        \If{size($b.\mathsf{lrs}$) $>= N_v-f_v$}
          \State $b.\mathsf{receipt} := \rec(b.\mathsf{lrs})$
          \State $b.\mathsf{status} :=  \mathsf{Voted}$
        \EndIf
    \EndIf
  \EndIf
\EndProcedure
\item[]

\Function{Ballot::VerifyVoteCode}{$\mathsf{vote\textrm{-}code}$}
  \For{$l=1$ to $\mathsf{ballot\_lines}$}
    \If{$\mathsf{lines}[l].\mathsf{hash} == h(\mathsf{vote\textrm{-}code}||\mathsf{lines}[l].\mathsf{salt})$}
      \Return $l$
    \EndIf
  \EndFor
  \Return $null$
\EndFunction
\end{algorithmic}
\end{algorithm}
\fi

\subsection{Voter}\label{subsec:voter}
We expect the voter, who has received a ballot from EA, to know the URLs of at least $f_\mathsf{v}+1$ VC nodes.
To vote, she picks one part of the ballot at random, selects the vote code representing her chosen option, and loops selecting a VC node at random and posting the vote code, until she receives a valid receipt. 
After the election, the voter can verify two things from the updated BB. 
First, she can verify her cast vote code is included in the tally set. 
Second, she can verify that the unused part of her ballot, as ``opened'' at the BB, matches the copy she received before the election started.
This step verifies that the vote codes are associated with the expected options as printed in the ballot.
Finally, the voter can delegate both of these checks to an \emph{auditor}, without sacrificing her privacy; this is because the cast vote code does not reveal her choice, and because the unused part of the ballot is completely unrelated to the used one.

\subsection{Bulletin Board}\label{sec:BB}
A BB node is a pubic repository of election-specific information. 
By definition, it can be read via a public and anonymous channel.
Writes, on the other hand, happen over an authenticated channel, implemented with PKI originating from the voting system.
BB nodes are independent from each other; a BB node never directly contacts another BB node.
Readers are expected to issue a read request to all BB nodes, and trust the reply that comes from the majority.
Writers are also expected to write to all BB nodes; their submissions are always verified, and explained in more detail below.
\\\indent 
After the setup phase, each BB node publishes its initialization data. 
During election hours, BB nodes remain inert.
After the voting phase, each BB node receives from each VC node, the final vote-code set and the shares of $\mathsf{msk}$.
Once it receives $f_v+1$ identical final vote code sets, it accepts and publishes the final vote code set.
Once it receives $N_v-f_v$ valid key shares (again from VC nodes), it reconstructs the $\mathsf{msk}$, decrypts all the encrypted vote codes in its initialization data, and publishes them.
%
%
\\\indent 
At this point, the cryptographic payloads corresponding to the cast vote codes are made available to the trustees. 
Trustees, in turn, read from the BB subsystem, perform their individual calculations and then write to the BBs; these writes are verified by the trustees keys, generated by the EA. 
Once enough trustees have posted valid data, the BB node combines them and publishes the final election result.
\\\indent
We intentionally designed our BB nodes to be as simple as possible for the reader, refraining from using a Replicated State Machine, which would require readers to run algorithm-specific software. 
The robustness of BB nodes comes from controlling all write accesses to them. 
Writes from VC nodes are verified against their honest majority threshold. 
Further writes are allowed only from trustees, verified by their keys.
\\\indent 
Finally, a reader of our BB nodes should post her read request to all nodes, and accept what the majority responds with ($f_b+1$ is enough). 
We acknowledge there might be temporary state divergence (between BB nodes), between the time a writer finishes updating one BB node, and until he updates another. 
However, given our thresholds, this should be only momentary, alleviated with simple retries.
Thus, if there is no reply backed by a clear majority, the reader should retry until there is such a reply.

\subsection{Trustees}
After the end of election, each trustee fetches all the election data from the BB subsystem and verifies the validity of the election data. 
For each ballot, there are two possible valid outcomes: i) one of the A/B parts are voted, ii) none of the A/B parts are voted. 
If both A/B parts of a ballot are marked as voted, then the ballot is considered as invalid and should be discarded. 
Similar, trustees also discard those ballots where more than maximum allowed commitments in an A/B part are marked as voted.
In case (i), for each encoded option commitment in the voted part, $\mathsf{Trustee}_\ell$ submits its corresponding share of the opening of the commitment to the BB; for each encoded option commitment in the unused part, $\mathsf{Trustee}_\ell$ computes and posts the share of the final message of the corresponding zero knowledge proof, showing the validity of those commitments; meanwhile, those commitments marked as voted are collected to a tally set $\mathbf{E}_{\mr{tally}}$. 
In case (ii), for each encoded option commitment in both parts, $\mathsf{Trustee}_\ell$ submits its corresponding share of the opening of the commitment to the BB.
Finally, denote $\mathbf{D}^{(\ell)}_{\mr{tally}}$ as $\mathsf{Trustee}_\ell$'s set of shares of option encoding commitment openings,   corresponding to the commitments in $\mathbf{E}_{\mr{tally}}$.
$\mathsf{Trustee}_\ell$ computes the opening share for $E_{\mr{sum}}$ as $T_\ell = \sum_{D\in \mathbf{D}^{(\ell)}_{\mr{tally}}}$ and then submits $T_\ell$ to each BB node.


\subsection{Auditors}
\label{sec:auditors}
Auditors are participants of our system who can verify the election process. 
The role of the auditor can be assumed by voters or any other party.
After election end time, auditors read information from the BB and verify the correct execution of the election, by verifying the following:
a) within each opened ballot, no two vote codes are the same; 
b) there are no two submitted vote codes associated with any single ballot part; 
c) within each ballot, no more than one part has been used; 
d) all the openings of the commitments are valid; 
e) all the zero-knowledge proofs that are associated with the used ballot parts are completed and valid;
\\\indent
In case they received audit information (an unused ballot part and a cast vote code) from voters who wish to delegate verification, they can also verify:
f) the submitted vote codes are consistent with the ones received from the voters; 
g) the openings of the unused ballot parts are consistent with the ones received from the voters.

\section{Security of D-Demos}\label{sec:security_full}
In this Section, we describe at length the security properties that D-DEMOS achieves under the threat model described in Section~\ref{subsec:threat}. Specifically, we show that our distributed system achieves liveness and safety, as well as end-to-end verifiability and voter privacy at the same level of~\cite{DEMOS}\footnote{In~\cite{DEMOS}, the authors use the term \emph{voter privacy/receipt-freeness}, but they actually refer to the same property.}.\\
\par  We use $m$, $n$ to denote the number of options and voters respectively. We denote by $\lambda$ the cryptographic security parameter and we write $\mathsf{negl}(\lambda)$ to denote that a function is negligible in $\lambda$. We assume the following security guarantees for the underlying cryptographic tools:
\begin{enumerate}
 \item The probability that an adversary running in  $\lambda$ steps forges digital signatures is no more than $\mathsf{negl}(\lambda)$.
 \item There exists a constant $c<1$ s.t. the probability that an adversary running in $O(2^{\lambda^c})$ steps breaks the hiding property of the option-encoding commitments is no more than $\mathsf{negl}(\lambda)$.
\end{enumerate}
 %
\subsection{Liveness}\label{subsec:sec_liveness}
%

To prove the liveness our DBB guarantees, we assume (I) an upper bound $\delta$ on the delay of the delivery of messages and (II) an upper bound $\Delta$ on the drift of all clocks (see Section~\ref{subsubsec:assumptions}). Furthermore, to express liveness rigorously, we formalize the behavior of honest voters regarding maximum waiting before vote resubmission as follows: 

\begin{definition}[$\mathrm{[}d\mathrm{]}$-\textbf{patience}]\label{dfnt:patient}
Let $V$ be an honest voter that submits her vote at some VC node when $\Cl[V]=T$. We say that $V$ is $[d]$-\emph{patient}, when the following condition holds:
\noindent If $V$ does not obtain a receipt by the time that $\Cl[V]=T+d$, then she will blacklist this VC node and submit the same vote to another randomly selected VC node.

\end{definition}

\noindent Using Definition~\ref{dfnt:patient}, we prove the liveness of our e-voting system in the following theorem:

\begin{theorem}[\textbf{Liveness}]\label{thm:liveness}
 Let $\A$ be an adversary against D-DEMOS under the model described in Section~\ref{subsec:threat}, where assumptions I,II in Section~\ref{subsubsec:assumptions} hold and $\A$ corrupts up to $f_v<N_v/3$ VC nodes.  Let $T_\mathsf{comp}$ be the worst-case running time of any procedure run by the VC nodes and the voters respectively during the voting protocol.
 \indent Let $T_\mathsf{end}$ denote the election end time. Define
 \[T_\mathsf{wait}:=(2N_v+4)T_\mathsf{comp}+12\Delta+6\delta\;.\] 
 Then, the following conditions hold:
 \begin{enumerate}
   \item\label{liveness-1} Every $[T_\mathsf{wait}]$-patient voter that is engaged in the voting protocol by the time that $\Cl[V]=T_\mathsf{end}-(f_v+1)\cdot T_\mathsf{wait}$, will obtain a valid receipt.
   \item\label{liveness-2} Every $[T_\mathsf{wait}]$-patient voter that is engaged in the voting protocol by the time that  
  $\Cl[V]=T_\mathsf{end}-y\cdot T_\mathsf{wait}$, where $y\in[f_v]$,
  will obtain a valid receipt with more than $1-3^{-y}$ probability.
 \end{enumerate}

\end{theorem}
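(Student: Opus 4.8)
The plan is to reduce the whole statement to a single \emph{timing lemma} about one ``attempt'': if a $[T_\mathsf{wait}]$-patient honest voter $V$ posts her vote code to an \emph{honest} VC node (the responder) at a local time $\Cl[V]=T$ with $T\le T_\mathsf{end}-T_\mathsf{wait}$, then $V$ receives a valid receipt by the time $\Cl[V]=T+T_\mathsf{wait}$. Once this lemma is in hand, both parts of the theorem follow by combining it with a counting argument (part~\ref{liveness-1}) and a probabilistic argument (part~\ref{liveness-2}) over the sequence of randomly chosen responders that patience forces $V$ to try.

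To establish the timing lemma I would walk through the voting protocol of Section~\ref{subsec:Vcnodes} phase by phase and bound the elapsed time on $V$'s clock. The causal chain is: VOTE reaches the responder; the responder runs \textsc{VerifyVoteCode} and multicasts ENDORSE; each honest node replies with a signed ENDORSEMENT; the responder assembles a $\mathsf{UCERT}$ from $N_v-f_v$ signatures and multicasts VOTE\_P with its receipt share; each honest node validates the $\mathsf{UCERT}$ and its share and answers with VOTE\_P; the responder reconstructs the receipt from $N_v-f_v$ shares and returns it. This is six point-to-point hops, contributing $6\delta$ under Assumption~I; the responder processes up to $N_v$ endorsements and up to $N_v$ shares, plus a constant number of local procedure calls (its own and the voter's), contributing $(2N_v+4)T_\mathsf{comp}$; and translating every deadline among the global clock, the responder's clock, and $V$'s clock costs up to $2\Delta$ at each of the six synchronization points by Assumption~II, contributing $12\Delta$. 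Summing gives exactly $T_\mathsf{wait}$. Two correctness points must be checked alongside the timing. First, the threshold is reachable: since $f_v<N_v/3$ there are $N_v-f_v=h_v$ honest VC nodes, precisely the number of endorsements and of shares required, and invalid contributions from the $f_v$ corrupted nodes are discarded by signature and VSS verification. Second, no competing certificate can stall $V$: because her ballot is delivered over an untappable channel, the adversary knows her other $160$-bit vote codes only with $\negl(\lambda)$ probability, so no honest node ever endorses a conflicting vote code for her $\mathsf{serial\textrm{-}no}$, and every honest node will endorse and later reveal its share for her code.

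For both parts I would combine the lemma with patience. A $[T_\mathsf{wait}]$-patient voter abandons a responder after at most $T_\mathsf{wait}$ on her own clock and tries a fresh, randomly chosen one; so a voter engaged by $\Cl[V]=T_\mathsf{end}-y\,T_\mathsf{wait}$ can complete $y$ distinct attempts, each starting no later than $\Cl[V]=T_\mathsf{end}-T_\mathsf{wait}$ and hence satisfying the lemma's hypothesis. For part~\ref{liveness-1}, take $y=f_v+1$: since at most $f_v$ nodes are corrupt and blacklisting forbids repeats, at least one responder among the $f_v+1$ tried is honest, and by the lemma that attempt returns a valid receipt with certainty. For part~\ref{liveness-2}, failing after $y\le f_v$ attempts requires every one of the $y$ distinct random picks to be corrupt; sampling without replacement from the $N_v$ nodes, the chance that the $(k{+}1)$-st pick is corrupt given $k$ corrupt so far is $\tfrac{f_v-k}{N_v-k}\le\tfrac{f_v}{N_v}<\tfrac13$ because $f_v<N_v/3$, so the product over the $y$ picks is strictly below $3^{-y}$ and the receipt is obtained with probability exceeding $1-3^{-y}$.

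I expect the main obstacle to be the clock-drift bookkeeping that produces the $12\Delta$ term: one must simultaneously guarantee that $V$ sees no more than $T_\mathsf{wait}$ elapse \emph{and} that every honest VC node, whose local clock may lead or lag the global clock by up to $\Delta$, still regards each message as arriving within election hours (its $\mathsf{SysTime}()$ check), so that it actually participates rather than treating the ballot as expired. Getting the margins right at each of the six synchronization points, rather than the routine summation of the $\delta$ and $T_\mathsf{comp}$ terms, is the delicate part.
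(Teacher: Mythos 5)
Your proposal is correct and follows essentially the same route as the paper's proof: a step-by-step timing account of a single attempt with an honest responder that sums to exactly $(2N_v+4)T_\mathsf{comp}+12\Delta+6\delta$, then the pigeonhole argument over $f_v+1$ blacklisted attempts for condition~1 and the without-replacement product of per-attempt corruption probabilities (each factor below $1/3$ since $N_v\geq 3f_v+1$) for condition~2. The only differences are presentational — you package the timing analysis as a standalone lemma and make explicit two side-conditions (threshold reachability and the impossibility of a competing endorsement for an honest voter's ballot) that the paper leaves implicit.
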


\begin{proof}
 Let $V$ be a $[T_\mathsf{wait}]$-patient voter initialized by the adversary $\A$ when $\Cl=\Cl[V]=T$. We will compute an upper bound on the time required for an honest responder $VC$ node to issue a receipt to $V$. This bound will be derived by the time upper bounds that correspond to each step of the voting protocol, as described in Sections~\ref{subsec:Vcnodes} and~\ref{subsec:voter}, taking also into account the $\Delta$ upper bound on clock drifts and the $\delta$ upper bound on message delivery. In Table~\ref{tab:liveness}, we provide the advance of these upper bounds at the global clock and the internal clocks of $V$ and $VC$, so that we illustrate the description of the computation described below.
 \par Upon initialization, $V$'s internal clock is synchronized with the global clock at time $\Cl=\Cl[V]=T$. After at most $T_\mathsf{comp}$ steps, $V$  submits her vote $(\mathsf{serial\textrm{-}no},\mathsf{vote\textrm{-}code})$ at internal clock time: $\Cl[V]=T+T_\mathsf{comp}$, hence at global clock time: $\Cl\leq T+\Delta$. 
 \par Thus, $VC$ will receive the vote of $V$ at internal time $\Cl[VC]\leq T+T_\mathsf{comp}+2\Delta+\delta$.  Then, $VC$ performs at most $T_\mathsf{comp}$ steps to verify the validity of the vote before it broadcasts its
 an ENDORSE request to the other VC nodes by global clock time 
 $\Cl\leq T+T_\mathsf{comp}+2\Delta+\delta+(T_\mathsf{comp}+\Delta)= T+2T_\mathsf{comp}+3\Delta+\delta$.
 \par All the other honest VC nodes ($N_v-f_v-1$ in total) will receive $VC$'s ENDORSEMENT request by global clock time: $\Cl\leq T+2T_\mathsf{comp}+3\Delta+2\delta$, which implies that the time at their internal clocks is at most
  $T+2T_\mathsf{comp}+4\Delta+2\delta$.
Upon receiving the request, all the honest VC nodes will verify if $\mathsf{vote\textrm{-}code}$ has never been endorsed before and if so, they respond with an ENDORSEMENT message after at most $T_\mathsf{comp}$ steps. The global clock at that point is at most
 $\Cl=(T+2T_\mathsf{comp}+4\Delta+2\delta)+T_\mathsf{comp}+\Delta=T+3T_\mathsf{comp}+5\Delta+2\delta$. Therefore, $VC$ will obtain the other honest VC nodes' ENDORSE messages at most when $\Cl[VC]=(T+3T_\mathsf{comp}+5\Delta+2\delta)+\Delta+\delta=T+3T_\mathsf{comp}+6\Delta+3\delta$. 
 \par In order to determine the uniqueness certificate UCERT for $V$'s vote, $VC$ has to verify up to $N_v-1$ messages (as the $f_v$ malicious VC nodes may also send arbitrary messages). Since the message verification  and the UCERT generation require $(N_v-1)\cdot T_\mathsf{comp}$ and $T_\mathsf{comp}$ steps respectively , $VC$ will broadcast its receipt share at global clock time 
 $\Cl\leq(T+3T_\mathsf{comp}+6\Delta+3\delta)+(N_v-1)\cdot T_\mathsf{comp}+T_\mathsf{comp}+\Delta=T+(N_v+3)T_\mathsf{comp}+7\Delta+3\delta$.
\par All the other honest VC nodes will receive $VC$'s receipt share by global clock time: $\Cl\leq T+(N_v+3)T_\mathsf{comp}+7\Delta+4\delta$, which implies that the time at their internal clocks is at most
  $T+(N_v+3)T_\mathsf{comp}+8\Delta+4\delta$.
  Then, they will verify $VC$'s share and broadcast their shares for $V$'s vote after at most $T_\mathsf{comp}$ steps. The global clock at that point is no more than
 $\Cl=T+(N_v+3)T_\mathsf{comp}+8\Delta+4\delta+(T_\mathsf{comp}+\Delta)=T+(N_v+4)T_\mathsf{comp}+9\Delta+4\delta$.
\par Therefore, $VC$ will obtain the other honest VC nodes' shares at most when
 $\Cl[VC]=T+(N_v+4)T_\mathsf{comp}+9\Delta+4\delta+(\Delta+\delta)=T+(N_v+4)T_\mathsf{comp}+10\Delta+5\delta$ and will process them in order to reconstruct the receipt for $V$. In order to collect $N_v-f_v-1$ receipt shares that are sufficient for reconstruction, $VC$ may have to perform up to $N_v-1$ receipt-share verifications, as the $f_v$ malicious VC nodes may also send invalid messages.
This verification requires at most $(N_v-1)\cdot T_\mathsf{comp}$ steps. Taking into account the $T_\mathsf{comp}$ steps for the reconstruction process, we conclude that $VC$ will finish computation by global time
 \begin{equation*}
\begin{split}
 \Cl&=T+(N_v+4)T_\mathsf{comp}+10\Delta+5\delta+\\
 &\hspace{10pt}+(N_v-1)T_\mathsf{comp}+T_\mathsf{comp}+\Delta=\\
 &=T+(2N_v+4)T_\mathsf{comp}+11\Delta+5\delta.
\end{split}
\end{equation*}
Finally, $V$ will obtain the receipt after at most $\delta$ delay from the moment that $VC$ finishes computation. Taking into consideration the drift on $V$'s internal clock, we have that if $V$ is honest and  has not yet obtained a receipt by the time that 
   \begin{equation*}
\begin{split}\Cl[V]&=\big(T+(2N_v+4)T_\mathsf{comp}+11\Delta+5\delta\big)+\Delta+\delta=\\
&=T+(2N_v+4)T_\mathsf{comp}+12\Delta+6\delta=T+T_\mathsf{wait},
\end{split}
\end{equation*}
 
 \noindent then, being $[T_\mathsf{wait}]$-patient, she can blacklist $VC$ and resubmit her vote to another VC node. We will show that the latter fact implies conditions~\ref{liveness-1} and~\ref{liveness-2} in the statement of the theorem:\\
\par\textbf{Condition~\ref{liveness-1}: }since there are at most $f_v$ malicious VC nodes, $V$ will certainly run into an honest VC node at her $(f_v+1)$-th attempt (if reached). Therefore, if $V$ is engaged in the voting protocol by the time that $\Cl[V]=T_\mathsf{end}-(f_v+1)\cdot T_\mathsf{wait}$, then she will obtain a receipt.\\
\par\textbf{Condition~\ref{liveness-2}: } if $V$ has waited for more than $y\cdot T_\mathsf{wait}$ time and has not yet received a receipt, then it has run at least $y$ failed attempts in a row. 
  At the $j$-th attempt, $V$ has $\dfrac{f_v-(j-1)}{N_v-(j-1)}$ probability to randomly select one of the remaining $f_v-(j-1)$ malicious VC nodes out of the $N_v-(j-1)$ non-blacklisted VC nodes. Thus,
 the probability that $V$ runs at least $y$ failed attempts in a row is
  \begin{equation*}
\begin{split}
 \prod_{j=1}^{y}\dfrac{f_v-(j-1)}{N_v-(j-1)}&= \prod_{j=1}^{y}\dfrac{f_v-(j-1)}{3f_v+1-(j-1)}<3^{-y}.
\end{split}
\end{equation*}  
Therefore, if $V$ is engaged in the voting protocol by the time that  $\Cl[V]=T_\mathsf{end}-y\cdot T_\mathsf{wait}$, then the probability that she will obtain a receipt is more than $1-3^{-y}$.

 \begin{table*}[ht]
 \footnotesize
\begin{center}
 \begin{tabular}{|M{3.5cm}|M{3.0cm}|M{3.0cm}|M{3.0cm}|M{3.0cm}|}
 \hline
\multirow{2}{*}{\textbf{Step}}&\multicolumn{4}{c|}{\textbf{Time upper bounds at each clock}}\\\cline{2-5}
&$\Cl$&$\Cl[V]$&$\Cl[VC]$&honest VC nodes' clocks\\
\hline\hline
$V$ is initialized & \cellcolor{cellgray}$T$ & $T$ & $T+\Delta$& $T+\Delta$\\
\hline
$V$ submits her vote to $VC$ & $ T+T_\mathsf{comp}+\Delta$ & \cellcolor{cellgray}$T+T_\mathsf{comp}$ & $T+T_\mathsf{comp}+2\Delta$&$T+T_\mathsf{comp}+2\Delta$\\
\hline
$VC$ receives $V$'s ballot & $ \cellcolor{cellgray}T+T_\mathsf{comp}+\Delta+\delta$ & $T+T_\mathsf{comp}+2\Delta+\delta$ & $T+T_\mathsf{comp}+2\Delta+\delta$&$T+T_\mathsf{comp}+2\Delta+\delta$\\
\hline
$VC$ verifies the validity of $V$'s ballot and broadcasts an ENDORSE message & $ T+2T_\mathsf{comp}+3\Delta+\delta$ & $T+2T_\mathsf{comp}+4\Delta+\delta$ & \cellcolor{cellgray}$T+2T_\mathsf{comp}+2\Delta+\delta$&$T+2T_\mathsf{comp}+4\Delta+\delta$\\
\hline
All the other honest VC nodes receive $VC$'s ENDORSE message& \cellcolor{cellgray}$T+2T_\mathsf{comp}+3\Delta+2\delta$ & $T+2T_\mathsf{comp}+4\Delta+2\delta$ & $T+2T_\mathsf{comp}+4\Delta+2\delta$&$T+2T_\mathsf{comp}+4\Delta+2\delta$\\
\hline
All the other honest VC nodes verify the validity of the ENDORSE message and respond with an ENDORSEMENT message & $T+3T_\mathsf{comp}+5\Delta+2\delta$ & $T+3T_\mathsf{comp}+6\Delta+2\delta$ & $T+3T_\mathsf{comp}+6\Delta+2\delta$&\cellcolor{cellgray}$T+3T_\mathsf{comp}+4\Delta+\delta$\\
\hline
$VC$ receives the ENDORSEMENT messages of all the other honest VC nodes &\cellcolor{cellgray}$T+3T_\mathsf{comp}+5\Delta+3\delta$ & $T+3T_\mathsf{comp}+6\Delta+3\delta$ & $T+3T_\mathsf{comp}+6\Delta+3\delta$&$T+3T_\mathsf{comp}+6\Delta+3\delta$\\
\hline
$VC$ verifies the validity of all the $N_v-1$ received messages until it obtains $N_v-f_v$ valid ENDORSEMENT messages & $ T+(N_v+2)T_\mathsf{comp}+7\Delta+3\delta$ & $T+(N_v+2)T_\mathsf{comp}+8\Delta+3\delta$ & \cellcolor{cellgray}$T+(N_v+2)T_\mathsf{comp}+6\Delta+3\delta$& $T+(N_v+2)T_\mathsf{comp}+8\Delta+3\delta$\\
\hline
$VC$ forms UCERT certificate and broadcsts its share and UCERT &$ T+(N_v+3)T_\mathsf{comp}+7\Delta+3\delta$ & $T+(N_v+3)T_\mathsf{comp}+8\Delta+3\delta$ & \cellcolor{cellgray}$T+(N_v+3)T_\mathsf{comp}+6\Delta+3\delta$& $T+(N_v+3)T_\mathsf{comp}+8\Delta+3\delta$\\
\hline
All the other honest VC nodes receive $VC$'s broadcast share and UCERT& \cellcolor{cellgray}$ T+(N_v+3)T_\mathsf{comp}+7\Delta+4\delta$ & $T+(N_v+3)T_\mathsf{comp}+8\Delta+4\delta$ &$T+(N_v+3)T_\mathsf{comp}+8\Delta+4\delta$& $T+(N_v+3)T_\mathsf{comp}+8\Delta+4\delta$\\
\hline
All the other honest VC nodes verify the validity of UCERT and $V$'s share and broadcast their shares &$ T+(N_v+4)T_\mathsf{comp}+9\Delta+4\delta$ & $T+(N_v+4)T_\mathsf{comp}+10\Delta+4\delta$ &$T+(N_v+4)T_\mathsf{comp}+10\Delta+4\delta$&\cellcolor{cellgray} $T+(N_v+4)T_\mathsf{comp}+8\Delta+4\delta$\\
\hline
$VC$ receives all the other honest VC nodes' shares &  \cellcolor{cellgray}$ T+(N_v+4)T_\mathsf{comp}+9\Delta+5\delta$ & $T+(N_v+4)T_\mathsf{comp}+10\Delta+5\delta$ &$T+(N_v+4)T_\mathsf{comp}+10\Delta+5\delta$&$T+(N_v+4)T_\mathsf{comp}+10\Delta+5\delta$\\
\hline
$VC$ verifies the validity of all the $N_v-1$ received messages until it obtains $N_v-f_v$ valid shares &$ T+(2N_v+3)T_\mathsf{comp}+11\Delta+5\delta$ & $T+(2N_v+3)T_\mathsf{comp}+12\Delta+5\delta$ &  \cellcolor{cellgray}$T+(2N_v+3)T_\mathsf{comp}+10\Delta+5\delta$&$T+(2N_v+3)T_\mathsf{comp}+12\Delta+5\delta$\\
\hline
$VC$ reconstructs and $V$'s receipt and sends it to $V$& $ T+(2N_v+4)T_\mathsf{comp}+11\Delta+5\delta$ & $T+(2N_v+4)T_\mathsf{comp}+12\Delta+5\delta$ &  \cellcolor{cellgray}$T+(2N_v+4)T_\mathsf{comp}+10\Delta+5\delta$&$T+(2N_v+4)T_\mathsf{comp}+12\Delta+5\delta$\\
\hline
$V$ obtains her receipt &$ T+(2N_v+4)T_\mathsf{comp}+11\Delta+6\delta$ &\cellcolor{cellgray}  $T+(2N_v+4)T_\mathsf{comp}+12\Delta+6\delta$ &  $T+(2N_v+4)T_\mathsf{comp}+12\Delta+6\delta$&$T+(2N_v+4)T_\mathsf{comp}+12\Delta+6\delta$\\
\hline
 \end{tabular}\end{center}
  \caption{Time upper bounds at $\Cl,\Cl[V]$, $\Cl[VC]$ and other honest VC nodes' clocks at each step of the interaction of $V$ with responder $VC$. The grayed cells indicate the reference point of the clock drifts at each step.}
\label{tab:liveness}
 \end{table*}
\end{proof}

\subsection{Safety}\label{subsec:sec_safety}
%

Our safety theorem is stated in the form of a contract adhered by the VC subsystem. Namely, the following safety theorem proves the gurantee that the recorded-as-cast feedback of D-DEMOS provides for every honest voter that obtained a valid receipt at the voting protocol.
\begin{theorem}[Safety]\label{thm:sec_safety}
Let $\A$ be an adversary against D-DEMOS under the model described in Section~\ref{subsec:threat} that corrupts up to $f_v<N_v/3$ VC nodes, up to $f_b<N_b/2$ BB nodes and up to $N_t-h_t$ out-of $N_t$ trustees.  
Then, every honest voter who receives a valid receipt from a VC node, is assured her vote will be published on the honest BB nodes and included in the election tally, with probability at least \[1-\mathsf{negl}(\lambda)-\dfrac{f_v}{2^{-64}-f_v}\;.\]
\end{theorem}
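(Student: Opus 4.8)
The plan is to establish the ``contract'' chain from the accepted receipt back to the published tally, and to treat the small additive terms as the only ways this chain can break. First I would argue that whenever an honest voter accepts a receipt, it must equal the $64$-bit value $\mathsf{receipt}_{\ell,j}$ that the EA placed on her ballot for the cast vote code $vc_a$, since she compares the returned value against her own ballot line. This value is secret-shared among the VC nodes with an $(N_v-f_v,N_v)$-VSS, and $\A$ controls only $f_v<N_v-f_v$ of them; by the hiding property of the sharing, the corrupted nodes' shares reveal nothing about the receipt, so a malicious responder cannot produce the correct value on its own and any such success is a pure guess. Bounding the probability that one of the at most $f_v$ malicious responders the voter may contact guesses the uniform $64$-bit receipt yields the additive term $\tfrac{f_v}{2^{64}-f_v}$. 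Conditioned on no such guess, the receipt was genuinely reconstructed via $\rec(\cdot)$ from $N_v-f_v$ valid shares, which by guarantee (b) of Section~\ref{subsec:Vcnodes} requires a valid $\mathsf{UCERT}$ for $vc_a$ to have been verified before each honest share was disclosed; unforgeability of signatures (the $\negl(\lambda)$ term) makes this $\mathsf{UCERT}$ a genuine collection of $N_v-f_v$ endorsements.

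Next I would convert the $\mathsf{UCERT}$ into knowledge held by honest nodes. A valid $\mathsf{UCERT}$ carries $N_v-f_v$ signatures, of which at least $N_v-2f_v\ge f_v+1$ originate from honest VC nodes; each such node endorsed $vc_a$ and therefore knows it. Likewise, reconstruction consumed $N_v-f_v$ receipt shares, at least $N_v-2f_v\ge f_v+1$ of them from honest nodes that disclosed a share only after verifying the same $\mathsf{UCERT}$. Guarantee (a) of Section~\ref{subsec:Vcnodes} then rules out a competing $\mathsf{UCERT}$ for any $vc_b\ne vc_a$ on this ballot, so $vc_a$ is the unique voted code.

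I would then invoke vote-set consensus to propagate $vc_a$ to every honest node. Because any two $(N_v-f_v)$-subsets of the $N_v$ nodes intersect in an honest node, and at least $f_v+1$ honest nodes broadcast $\langle\sn,vc_a,\mathsf{UCERT}\rangle$ in the ANNOUNCE step, every honest node receives $vc_a$ among the $N_v-f_v$ messages it awaits and enters binary consensus with opinion $1$; by the validity property of binary consensus the decision is $1$. Any honest node still lacking $vc_a$ recovers it through the RECOVER-REQUEST/RESPONSE exchange, whose replies are authenticated by the accompanying $\mathsf{UCERT}$ so that malicious replies are discarded. Hence all honest VC nodes place $\langle\sn,vc_a\rangle$ in the final voted set and submit this identical set to every BB node.

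Finally I would close the chain at the BB and the trustees. Each honest BB node receives the same final set from the $\ge N_v-f_v\ge f_v+1$ honest VC nodes, so it meets its $f_v+1$ identical-copies threshold and publishes a set containing $vc_a$; the bound $f_b<N_b/2$ guarantees that a reader obtains this value from the honest majority of BB nodes. The trustees then include the option-encoding commitment associated with $vc_a$ in $\mathbf{E}_{\mr{tally}}$, and with $h_t$ honest trustees the $(h_t,N_t)$-VSS openings reconstruct the homomorphic total, whose binding property (folded into $\negl(\lambda)$) ensures the opened tally genuinely counts $vc_a$. A union bound over the guessing event, the signature-forgery event, and the commitment-binding event gives the stated success probability $1-\negl(\lambda)-\tfrac{f_v}{2^{64}-f_v}$. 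I expect the main obstacle to be the first step: rigorously bounding the receipt-forgery probability by $\tfrac{f_v}{2^{64}-f_v}$ via the VSS hiding property, so that the adversary's $f_v$ shares certifiably grant no advantage beyond guessing, together with a careful treatment of the recovery sub-protocol that proves $vc_a$ reaches \emph{every} honest VC node.
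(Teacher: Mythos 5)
Your proposal is correct and follows essentially the same route as the paper's proof: the paper splits into two cases (receipt produced without a complete interaction, bounded by the $f_v$ guessing attempts against the uniform $64$-bit receipt, the information-theoretic security of the $(N_v-f_v,N_v)$-VSS, and signature unforgeability for $\mathsf{UCERT}$, versus a genuine reconstruction, which propagates through vote-set consensus and the recovery sub-protocol to all honest VC nodes, then to the BB majority and the trustees), and your linear chain covers exactly these ingredients with the same bounds. Incidentally, your denominator $2^{64}-f_v$ is the sensible reading; the paper's $2^{-64}-f_v$ is a typo.
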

\begin{proof}
 Let $V$ be an honest voter. Then, $\A$' strategies on attacking safety (i.e. provide a valid receipt to $V$ but force the VC subsystem to discard $V$'s ballot), is captured by either one of the two following cases:
 \begin{enumerate}
  \item[\textbf{Case 1.}] Produce the receipt without being involved in a complete interaction with the VC subsystem (i.e. with at least $f_v+1$ honest VC nodes). 
  \item[\textbf{Case 2.}] Provide a properly reconstructed receipt via a complete interaction with the VC subsystem.
 \end{enumerate}
Let $E_1$ (resp. $E_2$) be the event that Case 1 (resp. Case 2)  happens. We study both cases:\\
 \par\underline{\emph{Case 1.}} In this case, $\A$ must produce a receipt that matches $V$'s ballot with less than $N_v-f_v$ shares. $\A$ may achieve this by either one of the following ways:
 \begin{enumerate}[(i).]
  \item $\A$ attempts to guess the valid receipt; If $\A$ succeeds, then it can force the VC subsystem to consider $V$'s ballot not voted as no valid UCERT certificate will be generated for $V$'s ballot. Since there are at most $f_v$ malicious VC nodes, the adversary has at most $f_v$ attempts to guess the receipt. Moreover, the receipt is a randomly generated 64-bit string, so after $i$ attempts, $\A$ has to guess among $(2^{64}-i)$ possible choices. Therefore, the probability succeeds is no more than \[\displaystyle\sum_{i=0}^{f_v-1}\dfrac{1}{2^{-64}-i}\leq \dfrac{f_v}{2^{-64}-f_v}.\]
 \item $\A$ attempts to reconstruct the receipt using all the malicious VC nodes' receipt shares. These are at most $f_v$, so by the information theoretic security of the $(N_v-f_v,N_v)$-VSS scheme the probability of success for $\A$ is no better than above ($\A$ can only perform a random guess).
 \item $\A$ attempts to produce fake UCERT certificates by forging digital signatures of other nodes. By the security of the digital signature scheme, this attack has $\negl(\lambda)$ success probability.
 \end{enumerate}
By the above, we have that
\begin{equation}\label{eq:safety-1}
 \Pr[\A\mbox{ wins }|E_1]\leq\dfrac{f_v}{2^{-64}-f_v}+\negl(\lambda)\;.
\end{equation}
 \par\underline{\emph{Case 2.}} In this case, by the security arguments stated in Section~\ref{subsec:Vcnodes} (steps \label{bc-step-announce} -\label{bc-step-result-1} ), every honest VC node will include the vote of $V$ in the set of voted tuples. This is because a) it locally knows the valid (certified) vote-code for $V$ accompanied by UCERT or b) it has obtained the valid vote-code via a RECOVER-REQUEST message. Recall that unless there are fake certificates (which happens with negligible probability) there can be only one valid vote-code for $V$.
 \par Consequently, all the honest VC nodes will forward the agreed set of votes (hence, also $V$'s vote) to the BB nodes.
 By the fault tolerance threshold for the BB subsystem, the $f_b$ honest BB nodes will publish $V$'s vote. Finally, the $h_t$ out-of $N_t$ honest trustees will read $V$'s vote from the majority of BB nodes and include it in the election tally. Thus, we have that 
\begin{equation}\label{eq:safety-2}
 \Pr[\A\mbox{ wins }|E_2]=\negl(\lambda)\;.
\end{equation}
By Eq.~\eqref{eq:safety-1},\eqref{eq:safety-2}, if $V$ obtains a valid receipt, then his vote will be published on the honest BB nodes and included in the election tally, with probability at least \[1-\mathsf{negl}(\lambda)-\dfrac{f_v}{2^{-64}-f_v}\;.\]
\end{proof}
Theorem~\ref{thm:sec_safety} provides guarantee that the honest voter's vote will be recorded-as-cast by the system on an individual level. Using Theorem~\ref{thm:sec_safety}, we prove the following corollary about the ``universal'' safety of the receipt-based feedback mechanism of D-DEMOS.

\begin{corollary}\label{cor:safety}
 Let $n$ be the number of voters. Let $\A$ be an adversary against D-DEMOS under the model described in Section~\ref{subsec:threat} that corrupts up to $f_v<N_v/3$ VC nodes, up to $f_b<N_b/2$ BB nodes and up to $N_t-h_t$ out-of $N_t$ trustees.  
Then, the probability that $\A$ achieves in excluding the vote of at least one honest voter that obtained a valid receipt from the election tally is no more than\[\dfrac{nf_v}{2^{-64}-f_v}+\mathsf{negl}(\lambda)\;.\]
\end{corollary}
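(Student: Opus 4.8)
The plan is to reduce the corollary directly to the individual-voter safety guarantee of Theorem~\ref{thm:sec_safety} via a union bound. First I would fix the setting: let $V_1,\ldots,V_n$ be the voters, and consider the subset $H$ of honest voters who obtained a valid receipt during the voting protocol. For each such voter $V_\ell\in H$, let $B_\ell$ denote the ``bad'' event that $V_\ell$'s vote is nonetheless excluded from the final election tally (i.e. not published on the honest BB nodes or not incorporated by the honest trustees). The event that $\A$ succeeds in excluding \emph{at least one} honest voter who received a valid receipt is then exactly $\bigcup_{\ell:\,V_\ell\in H} B_\ell$.

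The key step is to bound $\Pr[B_\ell]$ for each individual honest voter. This is precisely what Theorem~\ref{thm:sec_safety} provides: any honest voter holding a valid receipt has her vote included with probability at least $1-\mathsf{negl}(\lambda)-\tfrac{f_v}{2^{-64}-f_v}$, so $\Pr[B_\ell]\leq \mathsf{negl}(\lambda)+\tfrac{f_v}{2^{-64}-f_v}$ for each $\ell$. Applying the union bound over the at most $n$ honest voters yields
\[
\Pr\Big[\bigcup_{\ell}B_\ell\Big]\leq \sum_{\ell}\Pr[B_\ell]\leq n\cdot\Big(\mathsf{negl}(\lambda)+\dfrac{f_v}{2^{-64}-f_v}\Big)=\dfrac{nf_v}{2^{-64}-f_v}+n\cdot\mathsf{negl}(\lambda)\;,
\]
and since $n\cdot\mathsf{negl}(\lambda)$ remains negligible in $\lambda$ (as $n$ is polynomially bounded in the security parameter), this is absorbed into a single $\mathsf{negl}(\lambda)$ term, giving the claimed bound $\tfrac{nf_v}{2^{-64}-f_v}+\mathsf{negl}(\lambda)$.

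The main subtlety I would want to be careful about is the independence-versus-union-bound issue and the role of the adversary's adaptivity. The union bound is valid regardless of whether the events $B_\ell$ are independent or how $\A$ correlates its attacks across voters, so no independence assumption is needed; this is the clean part. The one point deserving a sentence of justification is that absorbing $n\cdot\mathsf{negl}(\lambda)$ back into $\mathsf{negl}(\lambda)$ is legitimate only because $n$ is polynomial in $\lambda$, which holds since the voter population is part of the (polynomially bounded) problem instance. I would also note explicitly that Theorem~\ref{thm:sec_safety}'s per-voter bound already accounts for the full adversarial strategy (guessing receipts, exploiting malicious shares, forging signatures) against a single ballot, so the union bound faithfully aggregates these per-ballot failure probabilities without any double-counting or hidden dependence on the corruption thresholds $f_v,f_b,N_t-h_t$, which are inherited unchanged from the theorem's hypotheses.
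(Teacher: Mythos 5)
Your proposal is correct and follows exactly the paper's argument: a union bound over the at most $n$ honest voters, each bounded by the per-voter guarantee of Theorem~\ref{thm:sec_safety}, with $n\cdot\mathsf{negl}(\lambda)$ absorbed into $\mathsf{negl}(\lambda)$. The paper states this in one sentence; your version merely spells out the same steps in more detail.
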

\begin{proof}
 The proof is straightforward by taking a union bound on the probability of successful attack on the safety of every single honest voter and the upper bound $\dfrac{f_v}{2^{-64}-f_v}+\mathsf{negl}(\lambda)$ on $\A$'s success probability derived by Theorem~\ref{thm:sec_safety}.
\end{proof}

\subsection{End-to-end Verifiability}\label{subsec:sec_e2e}
%

We adopt the end-to-end (E2E) verifiability definition in~\cite{DEMOS}, modified accordingly to our setting. Namely, we encode the options set $\{\mathsf{option}_1,\ldots,\mathsf{option}_m\}$, where the encoding of $\mathsf{option}_i$ is an $m$-bit string which is only in the $i$-th position. Let $F$ be the election evaluation function such that $F(\mathsf{option}_{i_1}\ldots,\mathsf{option}_{i_n})$ is equal to an $m$-vector whose $i$-th location is equal to the number of times $\mathsf{option}_i$ was voted. Then, we use the metric $\mr{d}_1(\cdot,\cdot)$ derived by the 1-norm, $\|\cdot\|_1$ scaled to half, i.e.,
\begin{center}
 \begin{tabular}{ccl}
  $\mr{d}_1: $&$ \mb{Z}_+^m\times\mb{Z}_+^m$&$\longrightarrow\mb{R}$\\
  &$(w,w')$&$\longmapsto\frac{1}{2}\cdot\sum_{i=1}^n|w_i-w'_i|$
 \end{tabular}
\end{center}
to measure the success probability of the adversary with respect to the amount of tally deviation $d$ and the number of voters that perform audit $\theta$.
\par We model end-to-end verifiability  via a game between a challenger and an adversary. The adversary starts by selecting the identities of the voters, options, VC nodes, BB nodes and trustee nodes
identities for given parameters $n,m,N_v,N_b,N_t$.
In~\cite{DEMOS}, the adversary corrupts the EA which manages the elestion setup, the vote collection, and the tally computation. Analogolously, the adversary in D-DEMOS now fully controls the EA, all the trustees, and all the VC nodes. Moreover,~\cite{DEMOS} assumes a consistent BB. In our model, we guarantee a this BB by restricting the adversary to statically corrupt a minority of the BB nodes.  
\par At the voting phase, it manages the vote casting of every voter. For each voter, the adversary may choose to corrupt her 
or to allow the challenger to play on her behalf. In the second case, the adversary
provides the option selection that the honest voter will voter. The adversary finally posts the election transcript
to the BB.
Finally, we make use of a \emph{vote extractor} algorithm $\mathcal{E}$ (not necessarily running in polynomial-time) that extracts the non-honestly cast votes. The adversary will win the game provided that there is a subset 
 of  $\theta$ honest voters that audit the result successfully but the deviation of the tally is bigger
than $d$; the adversary will also win in case the vote extractor fails to produce the option
selection of the dishonest voters but still,  $\theta$ honest voters verify correctly. 
The attack game is specified in detail in Figure~\ref{fig:int.game}. 
	\begin{boxfig}{\label{fig:int.game}The E2EVerifiability Game between the challenger $\Ch$ and the adversary $\mathcal{A}$ using the vote extractor $\mathcal{E}$.}{}	
			\underline{E2E Verifiability Game $G_{\mathsf{e2e\textrm{-}ver}}^{\mathcal{A},\mathcal{E},d,\theta}(1^\la,m,n,N_v,N_b,N_t)$}
	\begin{enumerate}
	\item[]
		\item $\mathcal{A}$ on input $1^\lambda, n, m,N_v,N_b,N_t$, chooses a list of options $\{\mathsf{option}_1,\ldots,\mathsf{option}_m\}$, a set of voters $\mathcal{V}=\{V_1,\ldots,V_n\}$, a set of VC nodes $\mathcal{VC}=\{\mathsf{VC}_1,\ldots,\mathsf{VC}_{N_v}\}$, a set of 
  BB nodes $\mathcal{BB}=\{\mathsf{BB}_1,\ldots,\mathsf{BB}_{N_b}\}$, and a set of trustees $\mathcal{T}=\{T_1,\ldots,T_{N_t}\}$. It provides the challenger $\mathsf{Ch}$ with all the above sets. 
\par$\hspace{3pt}$ Throughout the game, $\mathcal{A}$ controls the EA, all the VC nodes and  all the trustees. In addition, $\mathcal{A}$ may corrupt a fixed set of up to BB nodes, denoted by $\mathcal{BB}_\mathsf{succ}$. On the other hand, $\mathsf{Ch}$ plays the role of all the honest BB nodes.
		\item
		$\mathcal{A}$ and $\Ch$ engage in an interaction
		where $\mathcal{A}$ schedules the vote casting executions of all voters. 
		For each voter $V_\ell$, $\mathcal{A}$ can either completely control the voter 
		or allow $\Ch$ to operate on $V_\ell$'s behalf, in which case  $\mathcal{A}$ provides $\Ch$ with an option selection $\mathsf{option}_{i_\ell}$. 
		Then, $\Ch$ casts a vote for $\mathsf{option}_{i_\ell}$,  and, provided the voting execution terminates successfully,  $\Ch$ obtains
		the audit information $\mathsf{audit}_\ell$ on behalf of $V_\ell$.\vspace{1pt}
	\par$\hspace{3pt}$	Let $\mathcal{V}_\mathsf{succ}$ be the set of honest voters (i.e., those controlled by  $\Ch$) that terminated successfully.
		\item Finally, $\mathcal{A}$ posts a version of the election transcript $\mathsf{info}_j$ in every honest BB node $\mathsf{BB}_j\notin\mathcal{BB}_\mathsf{corr}$.
	\end{enumerate}   
	The game returns a bit which is $1$ if and
	only if the  following conditions hold true:
	\begin{enumerate}
	        \item[1.] $|\mathcal{BB}_\mathsf{corr}|<\lfloor N_b/2\rfloor$ (i.e., the majority of the BB nodes remain honest). 
	        \item[2.] $\forall\mathsf{BB}_{j}, \mathsf{BB}_{j'}\notin\mathcal{BB}_\mathsf{corr}:$ $\mathsf{info}_j=\mathsf{info}_{j'}:=\mathsf{info}$ (i.e, the data posted in all honest BB nodes are identical).
		\item[3.] $ | \mathcal{V}_\mathsf{succ}|  \geq \theta$ (i.e., at least $\theta$ honest voters terminated). 
		\item[4.] $\forall\ell\in[n]:$ if $ V_\ell\in\mathcal{V}_\mathsf{succ}$ then $V_\ell$ verifies succesfull, when given $(\mathsf{info},\mathsf{audit}_\ell)$ as input. 
	\end{enumerate}
	and either one of the following two conditions: 
	\begin{itemize}
		\item[5.a.]  if $\bot\neq \langle\mathsf{option}_{i_\ell}\rangle_{ V_\ell\notin\mathcal{V}_\mathsf{succ}} \leftarrow$ $\mathcal{E}(\mathsf{info}, \{\mathsf{audit}_\ell\}_{ V_\ell\in\mathcal{V}_\mathsf{succ}} )$ then
		\item[] $\mathrm{d}_1\big(\mbf{Result}(\mathsf{info}),F(\mathsf{option}_{i_1}\ldots,\mathsf{option}_{i_n})\big) \geq d$, 
		\item[5.b.] $\bot \leftarrow \mathcal{E}(\mathsf{info}, \{\mathsf{audit}_\ell\}_{ V_\ell\in\mathcal{V}_\mathsf{succ}} )$.  
	\end{itemize}
\end{boxfig}
\begin{definition}[E2E Verifiability]\label{def:int.dfnt}
	Let $0<\epsilon<1$ and $n,m,N_v,N_b,N_t\in\mathbb{N}$ polynomial in $\lambda$ with $\theta\leq n$. 
	Let $\Pi$ be an e-voting system with $n$ voters, $N_v$ VC nodes, $N_b$ BB nodes and $N_t$ trustees.
	We say that $\Pi$  achieves \emph{end-to-end verifiability} 
		with error $\epsilon$,
	w.r.t. the election  function $F$,
	a number of $\theta$ honest successfull voters and tally deviation $d$
	if there exists a 
	(not necessarily polynomial-time) vote extractor $\mathcal{E}$  such that 
	for any PPT adversary $\A$ it holds that
	\[\Pr[G_{\mathsf{e2e\textrm{-}ver}}^{\mathcal{A},\mathcal{E},d,\theta}(1^\la,m,n,N_v,N_b,N_t)=1] \leq \epsilon.\]
\end{definition}
\vspace{5pt}
\begin{theorem} Let $n,m,N_v,N_b,N_t,\theta, d\in\mathbb{N}$ where $1\leq\theta\leq n$.
Then, D-DEMOS run with $n$ voters, $m$ options, $N_v$ VC nodes, $N_b$ BB nodes and $N_t$ trustees achieves end-to-end  
		with error $2^{-\theta}+2^{-d}$,
	w.r.t. the election  function $F$,
	a number of $\theta$ honest successfull voters and tally deviation $d$.
\end{theorem}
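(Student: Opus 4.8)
The plan is to follow the end-to-end verifiability argument of~\cite{DEMOS}, adapted to the distributed setting, and to bound the adversary's winning probability by splitting it into two essentially independent failure modes, one contributing $2^{-\theta}$ and the other $2^{-d}$. First I would dispense with the bulletin board: conditions 1 and 2 of the game in Figure~\ref{fig:int.game} guarantee that all honest BB nodes hold one common transcript $\mathsf{info}$, so every honest voter and auditor reads the same data, and the analysis reduces to the single-consistent-BB model of~\cite{DEMOS}. Next I would specify the (computationally unbounded) vote extractor $\mathcal{E}$: it reads, for each voter $V_\ell\notin\mathcal{V}_\mathsf{succ}$, the cast vote code recorded in $\mathsf{info}$, opens the corresponding committed option encoding by brute force, and outputs the option whose unit vector it matches; if some such commitment does not open to a valid unit-vector encoding, $\mathcal{E}$ outputs $\bot$. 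With $\mathcal{E}$ fixed, a winning execution must trigger one of the two ``bad'' events below, and I would bound each separately.

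The first event, governing condition 5.b and the possibility of an invalid encoding inflating the tally, is that the posted zero-knowledge proofs verify yet some cast option-encoding commitment is not a valid unit vector. Here I would invoke the design of Section~\ref{sec:sysoverview}: the EA fixes the first move of every Sigma OR proof during setup, while the challenge is assembled from the A/B coins of the honest successful voters, of which there are at least $\theta$ by condition 3, and which are uniform and independent of the committed first move. By the special soundness of the Sigma OR proof, any single false statement is accepted only for a $2^{-\theta}$ fraction of challenges; since the posted proof is a conjunction over all ballots, the presence of even one false statement forces the challenge into that bad set, so this event occurs with probability at most $2^{-\theta}$, irrespective of the number of options or ballots. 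Conditioned on its complement, every cast encoding is a valid unit vector, so $\mathcal{E}$ never outputs $\bot$ and, because the (lifted) ElGamal commitments are perfectly binding, $\mbf{Result}(\mathsf{info})$ equals the exact homomorphic sum of the cast encodings.

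The second event, governing condition 5.a, is that all encodings are valid yet the tally still deviates by at least $d$ while all $\theta$ successful voters verify. Since the extracted votes of the dishonest voters match their cast encodings by construction, the only remaining source of deviation is a mismatch between an honest successful voter's intended option and the encoding actually attached to the vote code she cast, i.e.\ a malformed ballot. For each such voter the EA committed both the A and the B part before her random part choice was made, so a one-part malformation shifts her vote with probability $1/2$ and is exposed by her unused-part audit (checks (f)--(g) of Section~\ref{sec:auditors}) with the complementary probability $1/2$, while a two-part malformation is exposed with certainty. Hence, to reach deviation $d$ in the $\mathrm{d}_1$ metric the adversary must successfully shift at least $d$ honest successful voters, and since each shift survives auditing independently with probability $1/2$, the probability that all of them evade detection is at most $2^{-d}$.

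Combining the two bounds by a union bound yields the claimed error $2^{-\theta}+2^{-d}$; vote dropping and ballot stuffing are absorbed by the deterministic auditor checks (a)--(e) together with each voter's ``cast code appears in the tally'' check, which force any such manipulation to fail a successful voter's verification and hence lose the game. The main obstacle I anticipate is the soundness argument of the first event: because the adversary controls the EA it learns each honest voter's coin as soon as she casts, so I must argue that the way the coins are aggregated into the challenge prevents the adaptively acting adversary from steering the challenge into the bad set despite this leakage. This is exactly the point where the DEMOS coin-extraction technique, together with the first-move commitment preceding the voting phase, must be applied carefully to secure the clean $2^{-\theta}$ bound.
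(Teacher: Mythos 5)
Your overall strategy matches the paper's: reduce to a single consistent BB via conditions 1 and 2, define a brute-force vote extractor, split the analysis into the event that some cast option-encoding commitment is invalid despite verifying proofs (bounded by $2^{-\theta}$ via the min-entropy of the $\theta$ honest voters' A/B coins) and its complement (bounded by $2^{-d}$), and take a union bound. The first half is fine; the obstacle you flag at the end about the adaptively chosen challenge is real and is exactly what the paper discharges by invoking the min-entropy Schwartz--Zippel lemma of \cite{DEMOS}, so acknowledging it rather than resolving it is acceptable at this level of detail.

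There is, however, a genuine gap in your second event. You assert that, once all encodings are valid unit vectors, ``the only remaining source of deviation is a mismatch between an honest successful voter's intended option and the encoding actually attached to the vote code she cast,'' and that dropping and stuffing are absorbed by the deterministic checks. This misses the \emph{clash attack}, which the paper treats as a separate, essential case: the malicious EA issues $y$ honest voters identical, perfectly well-formed ballots sharing one serial number, so all $y$ of them audit the \emph{same} BB location and each individually verifies successfully, while the adversary injects $y-1$ votes of its choice into the $y-1$ BB locations thereby freed. No ballot is malformed, every voter's cast code appears in the tally, and checks (a)--(g) all pass, so your deterministic-check argument does not exclude it; it evades detection only because all $y$ voters happen to pick the same part, which occurs with probability $(1/2)^{y-1}$ for a deviation of $y-1$. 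The final bound $2^{-d}$ survives because this attack, like modification, costs a factor $1/2$ per unit of deviation, but your case analysis as written does not establish that these two are the only meaningful attacks, which is the load-bearing claim of this half of the proof.
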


\begin{proof}
  Without loss of generality we can assume that every party can read consistently the data published in the majority of the BB nodes, as otherwise the adversary fails to satisfy either conditions 1 or 2 of the E2E verifiability game.
 \par We first construct a vote extractor $\mc{E}$ for D-DEMOS as follows:
 $\mc{E}$ takes input as the election transcript, $\mathsf{info}$ and a set of audit information $\set{\mathsf{audit}_\ell}_{V_\ell\in\mathcal{V}_\mathsf{succ}}$. If $\mathsf{info}$ is not meaningful, then $\mc{E}$ outputs $\bot$. 
 Let $B\leq|\tc{V}|$ be the number of different serial numbers that appear in $\set{\mathsf{audit}_\ell}_{V_\ell\in\tc{V}}$.
 Otherwise, $\mc{E}$ (arbitrarily) arranges the voters in $V_\ell\in\mathcal{V}_\mathsf{succ}$ and the serial numbers not included in
  $\set{\mathsf{audit}_\ell}_{V_\ell\in\mathcal{V}_\mathsf{succ}}
 $ as $\langle V^{\mc{E}}_{\ell}\rangle_{\ell\in[n-|\mathcal{V}_\mathsf{succ}|]}$ and $\langle\mr{tag}^{\mc{E}}_{\ell}\rangle_{\ell\in[n-B]}$ respectively. Next, for every  $\ell\in[n-|\mathcal{V}_\mathsf{succ}|]$, $\mc{E}$ extracts $\mathsf{option}_{i_\ell}$ by brute force opening and decrypting (in superpolynomial time) all the committed and encrypted BB data, or sets $\mathsf{option}_{i_\ell}$ as the zero vector, in case $V_\ell$'s vote is not published in the BB. Finally, $\mc{E}$ outputs $\langle\mathsf{option}_{i_\ell}\rangle_{ V_\ell\notin\mathcal{V}_\mathsf{succ}}$.\vspace{2pt}
%
\par We will prove the E2E verifiability of D-DEMOS based on $\mc{E}$. Assume an adversary $\A$ that wins the game $G_{\mathsf{e2e\textrm{-}ver}}^{\mathcal{A},\mathcal{E},d,\theta}(1^\la,m,n,N_v,N_b,N_t)$. Namely, $\A$ breaks E2E verifiability by allowing at least $\theta$ honest successful voters and achieving tally deviation $d$. \vspace{2pt}
\par Let $Z$ be the event that $\A$ attacks by making at least one of the option-encoding commitments associated with some cast vote-code invalid (i.e., it is in tally set $\mathbf{E}_{\mathsf{tally}}$ but it is not a commitment to some candidate encoding). Since there are at least 
$\theta$ honest and succesful voters, the min-entropy of the collected voters' coins is at least $\theta$. By min-entropy Schwartz-Zippel Lemma~\cite[Lemma 1]{DEMOS} and following the lines of~\cite[Theorem 2]{DEMOS}, we have that when the challenge is extracted from voters' coins of min-entropy $\theta$, the verification of the Chaum-Pedersen zero-knowledge proofs used in D-DEMOS for committed ballot correctness in the BB is sound except for some probability error $2^{-\theta}$.
Therefore, since there is at least one honest voter that verifies, we have that
\begin{equation}\label{eq:E2E1}
\Pr[G_{\mathsf{e2e\textrm{-}ver}}^{\mathcal{A},\mathcal{E},d,\theta}(1^\la,m,n,N_v,N_b,N_t)=1\wedge Z]\leq 2^{-\theta}\;.
\end{equation}
Now assume that $Z$ does not occur. In this case, the vote extractor $\mc{E}$ will output the indended adversarial votes up to permutation. Thus, the deviation from the intended result that $\A$ achieves, derives only by miscounting the honest votes. This may be achieved by $\A$ in two different possible ways:
\begin{itemize}
 \item \textbf{Modification attacks. } When committing to the information of some honest voter's ballot part $\A$ changes the vote-code and option
correspondence that is printed in the ballot. This attack will be detected if the voter does chooses to audit with the modified ballot part (it uses the other part to vote). The maximum deviation achieved by this attack is $1$ (the vote will count for another candidate).
\item \textbf{Clash attacks. } $\A$ provides $y$ honest voters with ballots that have the same serial number, so that the adversary can inject $y-1$ votes of his preference in the $y-1$ ``empty'' audit locations in the BB. This attack is successful only if all the $y$ voters verify the same ballot on the BB and hence miss the injected votes that produce the tally deviation. The maximum deviation achieved by this attack is $y-1$.
\end{itemize}
We stress that if $Z$ does not occur, then the above two attacks are the only meaningful\footnote{By meaningful we mean that the attack is not trivially detected. For example, the adversary may post malformed information in the BB nodes but if so, it will certainly fail at verification.} for $\A$ to follow. Indeed, if (i) all zero knowledge proofs are valid, (ii) all the honest voters are pointed to a unique audit BB location indexed by the serial number on their ballots, and (iii) the the committed in this BB location  information matches the vote-code and option association in the voters' unused ballot parts, then by the binding property of the commitments, all the tally computed by the commitments included in $\mathbf{E}_{\mathsf{tally}}$ will decrypt to the actual intended result.
\par Since the honest voters choose the used ballot parts at random, the success probability of $x$ deviation via the modification attack is $(1/2)^x$. In addition,  the success probability to clash $y$ honest voters is $(1/2)^{y-1}$ (all $y$ honest voters choose the same version to vote). As a result, by combinations of modification and clash attacks, $\A$'s success probability reduces by a factor $1/2$ for every unit increase of tally deviation. Therefore, the upper bound of the success probability of $\A$ when $Z$ does not occur is
\begin{equation}\label{eq:E2E2}
 \Pr[G_{\mathsf{e2e\textrm{-}ver}}^{\mathcal{A},\mathcal{E},d,\theta}(1^\la,m,n,N_v,N_b,N_t)=1\mid\neg Z]\leq 2^{-d}\;.
\end{equation}
By Eq.~\eqref{eq:E2E1},~\eqref{eq:E2E2}, we have that
\begin{equation*}
 \Pr[G_{\mathsf{e2e\textrm{-}ver}}^{\mathcal{A},\mathcal{E},d,\theta}(1^\la,m,n,N_v,N_b,N_t)=1]\leq 2^{-\theta}+2^{-d}\;.
\end{equation*}
\end{proof}	

\subsection{Voter Privacy}\label{subsec:sec_priv}
%
\begin{boxfig}{\label{fig:priv.game} The Voter privacy Game between the adversary $\mathcal{A}$ and the challenger $\mathsf{Ch}$ using the simuator $\mathcal{S}$.}{}
  {\it \underline{Voter privacy Game $G_{\mathsf{priv}}^{\mathcal{A},\mathcal{S},\phi}(1^{\la},n,m,N_v,N_b,N_t)$}}
  \begin{enumerate}
  \item $\mathcal{A}$ on input $1^\lambda, n, m,k$, chooses a list of options $\mathcal{P}=\{P_1,\ldots,P_m\}$, a set of voters $\mathcal{V}=\{V_1,\ldots,V_n\}$, a set of trustees $\mathcal{T}=\{T_1,\ldots,V_k\}$, a set of VC nodes $\{\mathsf{VC}_1,\ldots,\mathsf{VC}_{N_v}\}$ a set of 
  BB nodes $\{\mathsf{BB}_1,\ldots,\mathsf{BB}_{N_t}\}$.
 It provides $\mathsf{Ch}$ with all the above sets. 
 \par Throughout the game, $\mathcal{A}$ corrupts all the VC nodes a fixed set of $f_b<N_b/3$ BB nodes and a fixed set of $f_t<N_t/3$ trustees. On the other hand, $\mathsf{Ch}$ plays the role of the EA and all the non-corrupted nodes.
  \item $\mathsf{Ch}$ engages with $\mathcal{A}$ to prepare the election following the \emph{Election Authority} protocol. 
  \item After that, $\mathsf{Ch}$ chooses a bit value $b\in\{0,1\}$.
\item  The adversary $\mathcal{A}$ and the 
challenger $\mathsf{Ch}$ engage in an interaction
where $\mathcal{A}$ schedules the voters
which may run concurrently. For each voter $V_\ell\in \mathcal{V}$, the adversary chooses whether $V_\ell$ is corrupted:
\begin{itemize}
\item 
If $V_\ell$ is corrupted, then $\mathsf{Ch}$ provides the credential $s_\ell$ to  $\mathcal{A}$, who will play the
role of $V_\ell$ to cast the ballot. 
\item If $V_\ell$ is not corrupted, then $\mathcal{A}$ provides two
option selections $\langle \mathsf{option}_{\ell}^0, \mathsf{option}_{\ell}^1 \rangle$ to 
the challenger $\mathsf{Ch}$ which
operates on $V_\ell$'s behalf, voting for option $\mathsf{option}_{\ell}^b$.
 The adversary
$\mathcal{A}$ is allowed to observe the network trace.
After cast a ballot,
the challenger $\mathsf{Ch}$ provides to  $\mathcal{A}$: (i) the audit information $\alpha_\ell$ that $V_\ell$ obtains from the protocol, and 
(ii) \underline{if $b=0$}, the current view of the internal state of the voter $V_\ell$, $view_{\ell}$, that the challenger obtains during voting, or  \underline{if $b=1$}, a simulated view of the internal state of $V_\ell$ produced by $\mathcal{S}(view_{\ell})$.
\end{itemize}

\item   The adversary $\mathcal{A}$ and the 
challenger $\mathsf{Ch}$ produce the election tally, running the \emph{Trustee} protocol. $\mathcal{A}$ is allowed to observe the network trace of that protocol. 

\item  Finally, $\mathcal{A}$ using all information collected above (including the contents of the BB) outputs a bit $b^*$.
\end{enumerate}   

Denote the set of corrupted voters as $\mathcal{V}_\mathsf{corr}$ and the set of honest
voters as $\tilde{\mathcal{V}} = \mathcal{V} \setminus \mathcal{V}_\mathsf{corr}$.
The game returns a bit which is $1$ if and
    only if the following hold true:
    \begin{enumerate}[(i).]
     \item $b=b^*$ (i.e., the adversary guesses $b$ correctly).
     \item $|\mathcal{V}_\mathsf{corr}| \leq \phi$ (i.e., the number of corrupted voters is bounded by $\phi$).
     \item $f( \langle \mathsf{option}_{\ell}^0 \rangle_{V_\ell \in \tilde{\mathcal{V}}} ) = f( \langle \mathsf{option}_{\ell}^1 \rangle_{V_\ell \in \tilde{\mathcal{V}}} )$ (i.e., the election result w.r.t. the set of voters in $\tilde{\mathcal{V}}$ does not leak $b$). 
    \end{enumerate}
\end{boxfig}
Our privacy definition is extends the one used in~\cite{DEMOS} to the distributed setting of D-DEMOS. Similarly, voter privacy is defined via a \emph{Voter Privacy} game as depicted in Figure~\ref{fig:priv.game}. The game, denoted as $G_{\mathsf{priv}}^{\mathcal{A},\mathcal{S},\phi}(1^{\la},n,m,N_v,N_b,N_t)$, is played between an adversary and a challenger, and we say the adversary wins the game if and only if it returns $1$. During the game, the adversary $\mathcal{A}$ first chooses a list of options, a set of voters, a set of VC nodes, a set of BB nodes, and a set of trustees, of size $m,n,N_v,N_b,N_t$ respectively. We require that the EA is destroyed after setup, whereas the adversary may control the entire VC subsystem, up to $f_b<N_b/2$ BB nodes and up to $f_t<N_t/3$ trustees. The adversary may also corrupt up to $\phi$ voters. The adversary then instructs the honest voters to vote according to either one of two alternative ways under the restriction that election tally is the same for both ways. The system achieves voter privacy if the adversary cannot distinguish which alternative was followed by the honest voters.

\begin{definition}[Voter Privacy]\label{def:priv}
Let $0<\epsilon<1$ and $n,m,N_v,N_b,N_t\in \mathbb{N}$. Let $\Pi$ be an e-voting system with $n$ voters, $m$ options awith $n$ voters, $N_v$ VC nodes, $N_b$ BB nodes and $N_t$ trustees w.r.t. the election
 function $f$. We say that $\Pi$ achieves \emph{voter privacy} with error $\epsilon$
 for at most $\phi$ corrupted voters, if there is a PPT voter simulator $\mathcal{S}$ such that
 for any PPT adversary $\mathcal{A}$: 
 \[\big|\Pr[G_{\mathsf{priv}}^{\mathcal{A},\mathcal{S},\phi}(1^{\la},n,m,N_v,N_b,N_t)=1] - 1/2\big| = \negl(\la).\]
\end{definition}

\begin{theorem}\label{thm:sec_safety}
Assume there exists a constant $c$, $0<c<1$ such that for any $2^{\lambda^c}$-time adversary $\mathcal{A}$, the advantage of breaking the hiding property of the underlying commitment scheme is $\mathsf{Adv}_{\mathsf{hide}}(\mathcal{A}) = \mathsf{negl}(\lambda)$. Let $\phi=\lambda^{c'}$ for any constant $c'<c$. 
Then, D-DEMOS run with $n$ voters, $m$ options, $N_v$ VC nodes, $N_b$ BB nodes and $N_t$ trustees achieves voter privacy for at most $\phi$ corrupted voters.
\end{theorem}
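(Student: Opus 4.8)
The plan is to follow the simulation-based privacy argument of~\cite{DEMOS}, adapted to the distributed setting, and to reduce the task of distinguishing the challenge bit $b$ to breaking the hiding property of the lifted-ElGamal option-encoding commitments. First I would exhibit the universal simulator $\mathcal{S}$. Since the challenger $\mathsf{Ch}$ plays the EA, all honest BB nodes, and the $h_t$ honest trustees, it already produces every ballot, every committed encoding $\Com(\vec{e}_i)$, and the first moves of the Chaum--Pedersen proofs. Crucially, in D-DEMOS the voter performs \emph{no} cryptographic operation: her internal state consists only of her ballot, her independent A/B coin, and the message transcript (the cast vote code and the returned receipt). Hence $\mathcal{S}$ can reproduce a consistent $\mathit{view}_\ell$ essentially by re-randomizing this transcript, and the only genuinely option-dependent data live inside the never-opened commitments on the BB. This isolates the privacy content to the commitments and their proofs.

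The next step is a hybrid argument transforming World~$0$ (honest voters vote $\mathsf{option}_\ell^0$, real views) into World~$1$ (they vote $\mathsf{option}_\ell^1$, simulated views). The key structural observation is that the trustees open \emph{only} the homomorphic total $E_{\mathrm{sum}}=\prod_{V_\ell\in\tilde{\mathcal{V}}}\Com(\vec{e}_{i_\ell})$ together with the fixed corrupted-voter contributions. By winning condition~(iii), $f(\langle\mathsf{option}_\ell^0\rangle_{V_\ell\in\tilde{\mathcal{V}}})=f(\langle\mathsf{option}_\ell^1\rangle_{V_\ell\in\tilde{\mathcal{V}}})$, so by additive homomorphism $E_{\mathrm{sum}}$ commits to the identical tally vector in both worlds and the reconstructed opening is the same; thus the total opening cannot assist $\mathcal{A}$. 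The entire distinguishing advantage must therefore stem from the individual, unopened commitments (and their validity proofs), which I would charge to the hiding property through a reduction $\mathcal{B}$ that embeds a hiding challenge as the honest voters' committed encodings, swapping the message vector $\langle\vec{e}_{i_\ell^0}\rangle$ for $\langle\vec{e}_{i_\ell^1}\rangle$ and simulating the remainder of the experiment for $\mathcal{A}$.

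The delicate point, and the main obstacle, is that the Chaum--Pedersen challenge in D-DEMOS is \emph{derived from the voters' coins}, which include the coins of the $\phi$ adversarially controlled voters, chosen adaptively during voting. Since $\mathcal{B}$ receives the embedded commitments without their openings, it lacks the witnesses and must produce consistent proof first moves via the Sigma-protocol honest-verifier simulator; but those first moves are posted at setup, before the challenge (hence the corrupted coins) is fixed. To make the simulated proofs consistent with the honest trustees' later share contributions, $\mathcal{B}$ must account for all $2^{\phi}$ possible settings of the corrupted coins. This is precisely where the hypotheses are used: hiding is assumed to hold even against $2^{\lambda^{c}}$-time adversaries, and with $\phi=\lambda^{c'}$ for $c'<c$ we have $2^{\phi}=2^{\lambda^{c'}}=o\!\left(2^{\lambda^{c}}\right)$, so $\mathcal{B}$ can absorb the $2^{\phi}$ factor while remaining inside the commitment's super-polynomial security regime, keeping its advantage non-negligible whenever $\mathcal{A}$'s is. Unwinding the hybrid then yields $\bigl|\Pr[G_{\mathsf{priv}}^{\mathcal{A},\mathcal{S},\phi}(1^{\la},n,m,N_v,N_b,N_t)=1]-1/2\bigr|=\negl(\la)$.

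I expect the bulk of the technical care to lie in this third paragraph: verifying that the coin-fixed simulation of the committed-ballot-correctness proofs is perfectly consistent with the honest trustees' opening and proof-completion shares across every hybrid step, and in the advantage accounting that carries the $2^{\phi}$ handling of the adaptive adversarial coins below the $2^{\lambda^{c}}$ hiding threshold. The rest, including the triviality of the voter-view simulator and the tally-invariance cancellation of the total opening, should be routine by comparison.
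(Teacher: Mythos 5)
Your overall architecture matches the paper's: a voter-view simulator that exploits the fact that voters do no cryptography, followed by a super-polynomial reduction to the hiding property of the option-encoding commitments, with the hypotheses $\phi=\lambda^{c'}$ and $2^{\lambda^c}$-hiding used exactly as you use them, to absorb a $2^{\phi}$ guessing factor for the adversarially chosen voter coins while staying below the commitment's security threshold. (The paper's simulator is more concrete than your ``re-randomize the transcript'': it flips a coin $b'$ and permutes, within each honest ballot, the vote codes of $\mathsf{option}_\ell^b$ and $\mathsf{option}_\ell^{b'}$, which is what makes the fully corrupted VC subsystem's knowledge of all vote codes harmless.)

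There is, however, one concrete gap in your reduction. You have $\mathcal{B}$ embed the hiding challenge directly as the honest voters' committed encodings and then argue that the trustees' opening of the homomorphic total $E_{\mathrm{sum}}$ ``cannot assist $\mathcal{A}$'' because condition (iii) makes the committed tally vector identical in both worlds. That observation is about the \emph{message}, but an opening of a Pedersen/lifted-ElGamal commitment also requires the aggregated \emph{randomness}, and $\mathcal{B}$ does not know the randomness inside its embedded challenge commitments --- so it cannot produce the honest trustees' opening shares for $E_{\mathrm{sum}}$ (nor for the individually opened unused/non-voted ballot parts) at all, and the simulation you defer to ``technical care'' actually breaks. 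The paper closes exactly this hole by having $\mathcal{B}$ also guess the election tally in advance (an extra $(n+1)^{m}$ factor in its running time, still $O(2^{\lambda^{c'}})$ overall) and replace \emph{all} commitments on the BB by commitments to $0$ except those in a single honest voter's ballot, which commit to the guessed tally; then every randomness in the hybrid is known to $\mathcal{B}$, the total opens to the correct result by homomorphism, and the switch from the real commitments to this all-zeros-plus-tally configuration is what gets charged to hiding. You would need to add this tally-guessing and commitment-restructuring step (or an equivalent mechanism for equipping $\mathcal{B}$ with the openings it must publish) for your reduction to go through.
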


\begin{proof}
To prove voter privacy, we explicitly construct a simulator $\mathcal{S}$ such that we can convert any adversary $\mathcal{A}$ who can win the privacy game $G_{\mathsf{priv}}^{\mathcal{A},\mathcal{S},\phi}(1^{\la},n,m,k)$ with a non-negligible probability to an adversary $\mathcal{B}$ who can break the hiding assumption of the underlying commitment scheme within $poly(\lambda)\cdot 2^{\lambda^{c'}} << 2^{\lambda^{c}}$ time.

Note that the challenger $\mathsf{Ch}$ is maintaining a coin $b\in\{0,1\}$ and always uses the option $\mathsf{option}_{\ell}^b$ to cast the honest voters' ballots. When $n-\phi <2$, the simulator $\mathcal{S}$ simply outputs the real voters' views. When $n-\phi \geq 2$, consider the following simulator $\mathcal{S}$. At the beginning of the experiment, $\mathcal{S}$ flips a coin $b'\leftarrow\{0,1\}$. For each honest voter $V_\ell$, $\mathcal{S}$ switches the vote-codes for option $\mathsf{option}_{\ell}^b$ and $\mathsf{option}_{\ell}^{b'}$.
Due to full VC corruption, $\mathcal{A}$ learns all the vote-codes. However, it does not help the adversary to disdinguish the simulated view from real view as the simulator only permutes vote-codes. Moreover, we can show that if $\mathcal{A}$ distinguishes the alternative followed by honest votes, then we can construct an algorithm $\mathcal{B}$ that invokes $\mathcal{A}$ and simulates an election execution where it guesses (i) the corrupted voters' coins (in $2^\phi$ expected attempts) and (ii) the election tally (in ${(n+1)^{m}}$ expected attempts). Thus, $\mathcal{B}$ finishes a compete simulation with high probability running in ${n^2(n+1)^{m}}\cdot2^\phi=O(2^{\lambda^{c'}})$ steps. Namely, $\mathcal{B}$ can replace all the commitments on the BB to commitments of $0$, except for the commitments in one honest voter's ballot, which commits to the guessed tally result.  By exploiting the distinguishing advantage of $\mathcal{A}$, $\mathcal{B}$ can break the hiding property of the option-encoding commitment scheme in $O(2^{\lambda^{c'}})=o(2^{\lambda^c})$ steps, thus leading to contradiction.
\end{proof}


\section{Implementation and evaluation}
\par\noindent\textbf{\emph{Implementation.}}
We implement the Election Authority component of our system as a standalone C++ application, and all other components in Java.
Whenever we store data structures on disk, or transmit them over the wire, we use Google Protocol Buffers~\cite{protobuf} to encode and decode them efficiently. We use the MIRACL library~\cite{MIRACL} for elliptic-curve cryptographic operations.
In all applications requiring a database, we use the PostgreSQL relational database system~\cite{PostgreSQL}.
\\\indent 
We build an \emph{asynchronous communications stack} (ACS) on top of Java, using Netty~\cite{Netty} and the asynchronous PostgreSQL driver from~\cite{PGASYNC}, using TLS based authenticated channels for inter-node communication, and a public HTTP channel for public access.
This infrastructure uses connection-oriented sockets, but allows the applications running on the upper layers to operate in a message-oriented fashion.
We use this infrastructure to implement VC and BB nodes. 
We implement Bracha's Binary Consensus directly on top of the ACS, and we use that to implement our Vote Set Consensus algorithm. 
We introduce a version of Binary Consensus that operates in batches of arbitrary size; this way, we achieve greater network efficiency.
We implement ``verifiable secret sharing with honest dealer'', by utilizing Shamir's Secret Share library implementation~\cite{JavaShamir}, and having the EA sign each share.
\ifextended
\par\noindent
\textbf{Web browser replicated service reader.} 
Our choice to model the Bulletin Board as a replicated service of non-cooperating nodes puts the burden of response verification on the reader of the service; a human reader is expected to manually issue a read request to all nodes, then compare the responses and pick the one posted by the majority of nodes. 
To alleviate this burden, we implement a web browser extension which automates this task, as an extension for Mozilla Firefox. The user sets up the list of URLs for the replicated service.  The add-on 1) intercepts any HTTP request towards any of these URLs, 2) issues the same request to the rest of the nodes, and 3) captures the responses, compares them in binary form, and routes the response coming from the majority, as a response to the original request posted by the user. Majority is defined by the number of defined URL prefixes; for 3 such URLs, the first 2 equal replies suffice. 
\\\indent 
With the above approach, the user never sees a wrong reply, as it is filtered out by the extension. 
Also note this process will be repeated for all dependencies of the initial web page (images, scripts, CSS), as long at they come from the same source (with the same URL prefix), verifying the complete user visual experience in the browser.  
\else
\\\indent
We implement a Mozilla Firefox extension which automates the task of reading from the BB, by intercepting the initial read request, replicating it to all BB nodes, capturing all replies, and showing a single correct reply only when it comes from the majority. For more details, see~\cite{extended}.
\fi

\par\noindent\textbf{\emph{Evaluation.}}
\label{sect:evaluation}
\begin{figure*}[!ht]
\centering
{
  \subfloat[]
  {
    \includegraphics[width=0.33\textwidth]{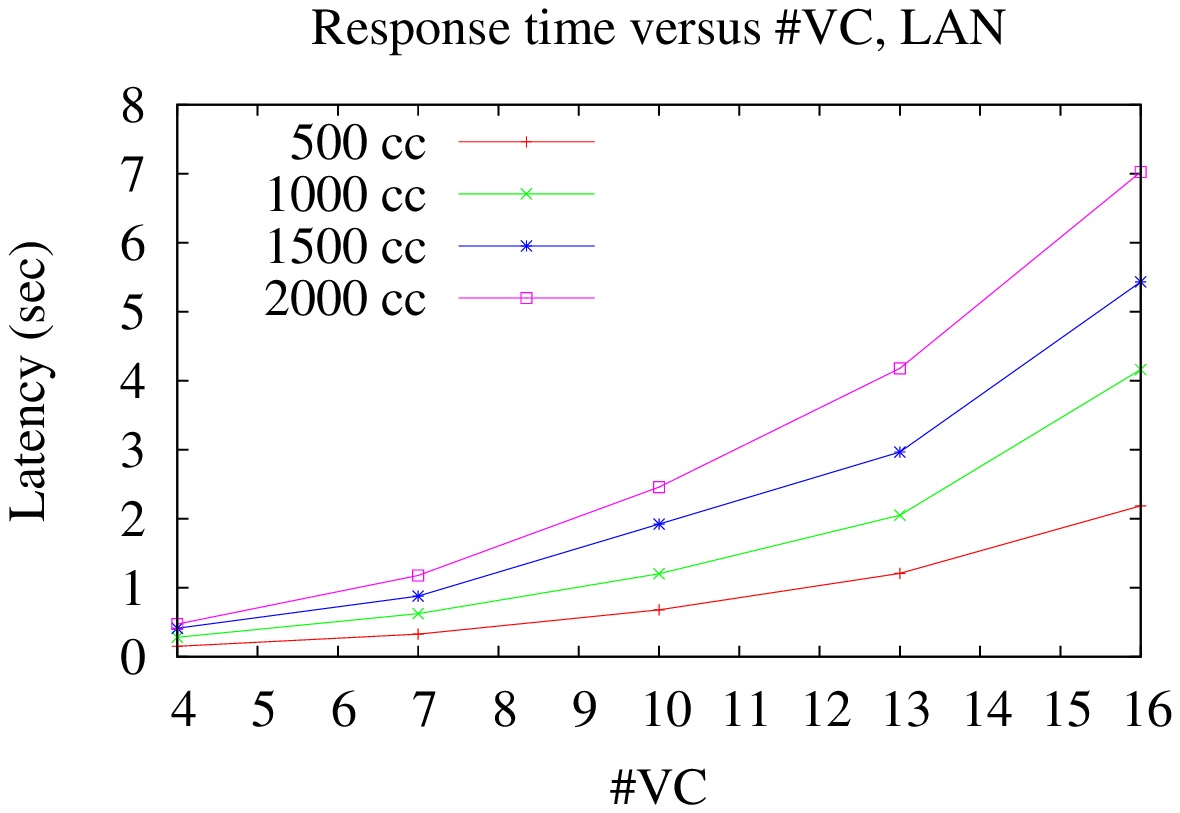}
    \label{fig:LANResponseTimeVC}
  }
  \subfloat[]
  {
    \includegraphics[width=0.33\textwidth]{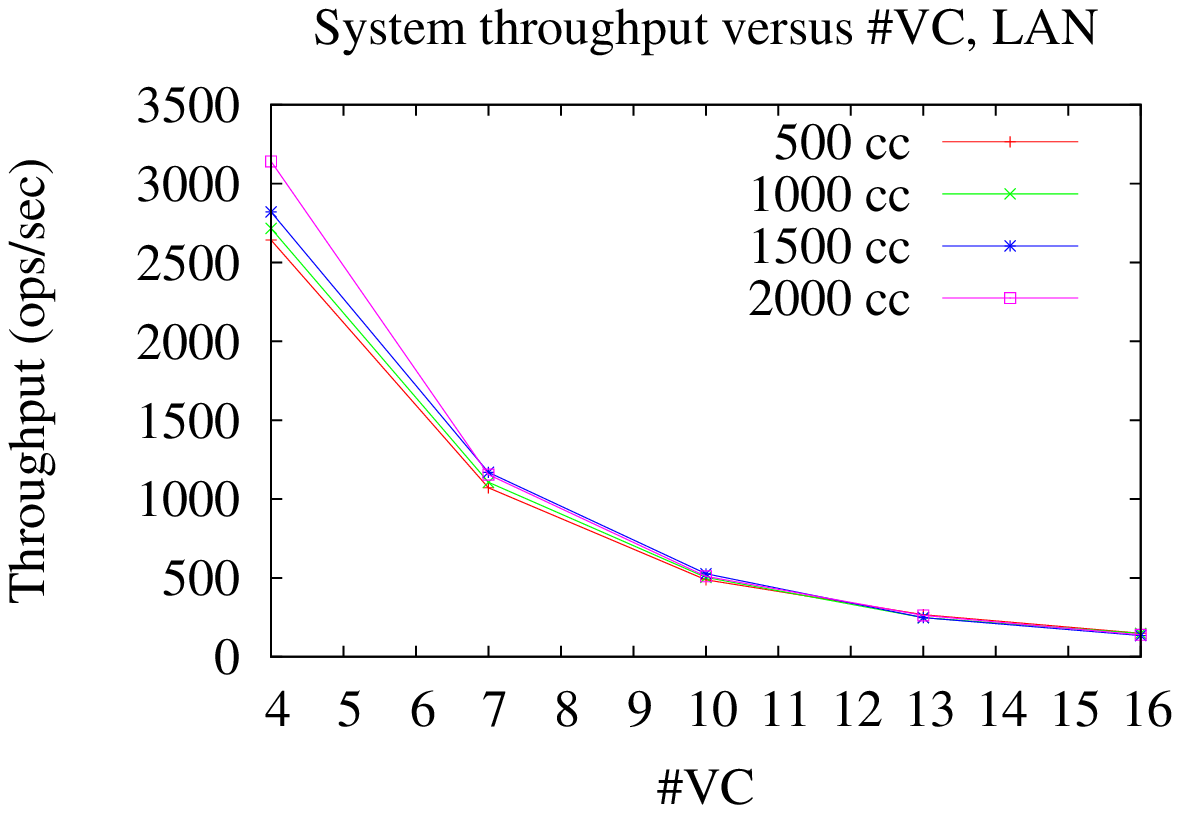}
    \label{fig:LANThroughputVC}
  }
  \subfloat[]
  {
    \includegraphics[width=0.33\textwidth]{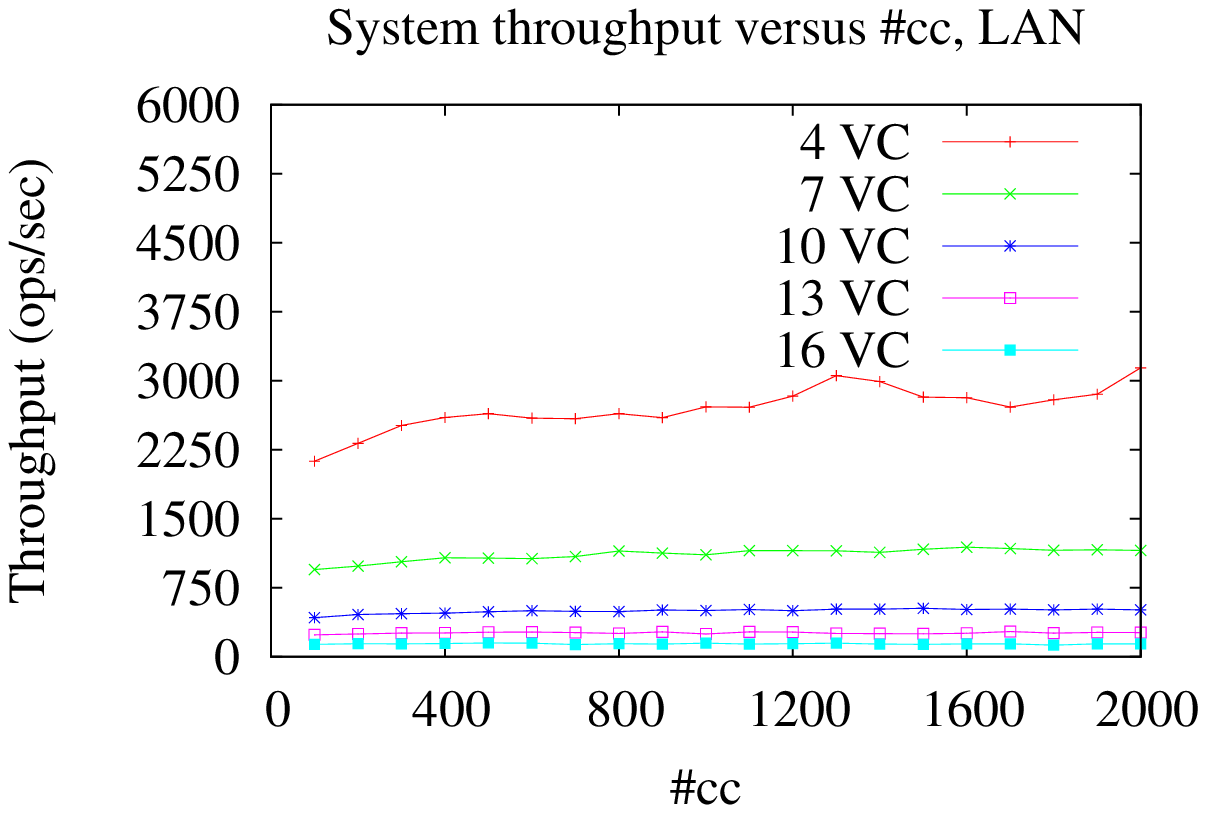}
    \label{fig:LANThroughputCC}
  }
  \hfil
  \subfloat[]
  {
    \includegraphics[width=0.33\textwidth]{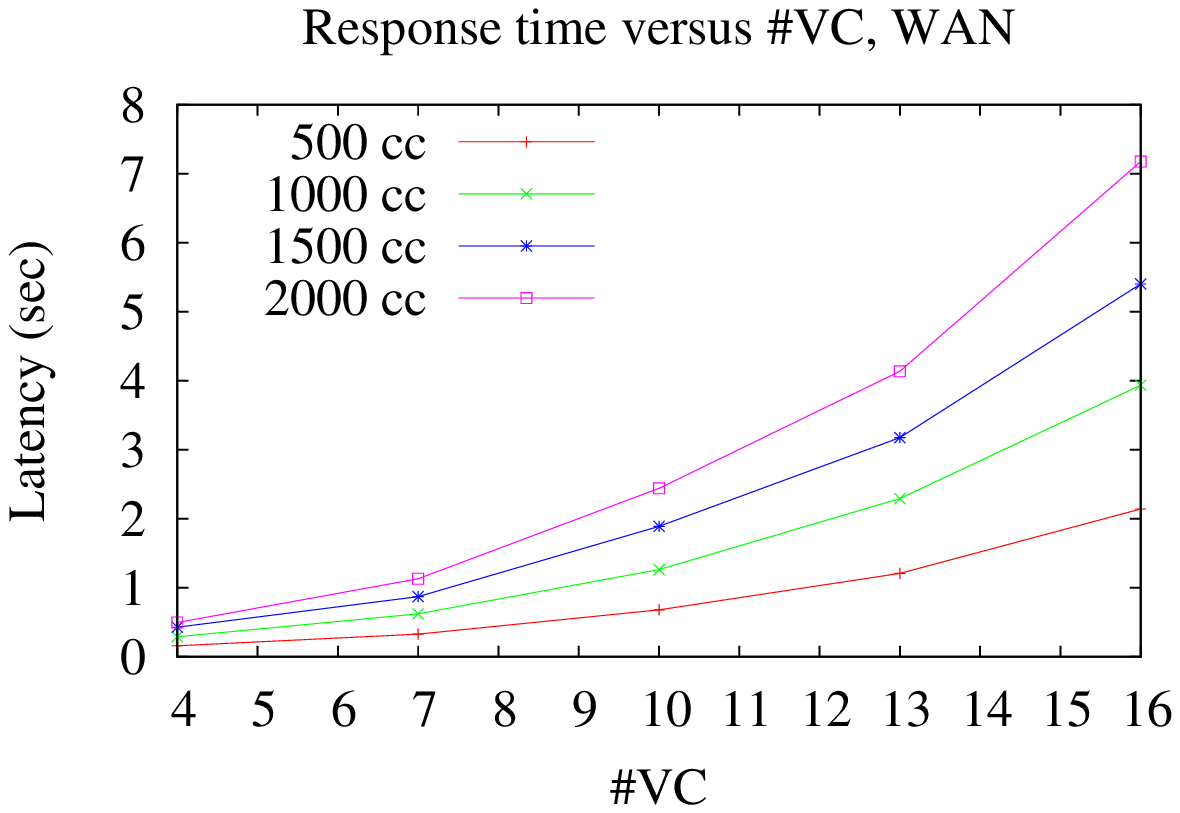}
    \label{fig:WANResponseTimeVC}
  }
  \subfloat[]
  {
    \includegraphics[width=0.33\textwidth]{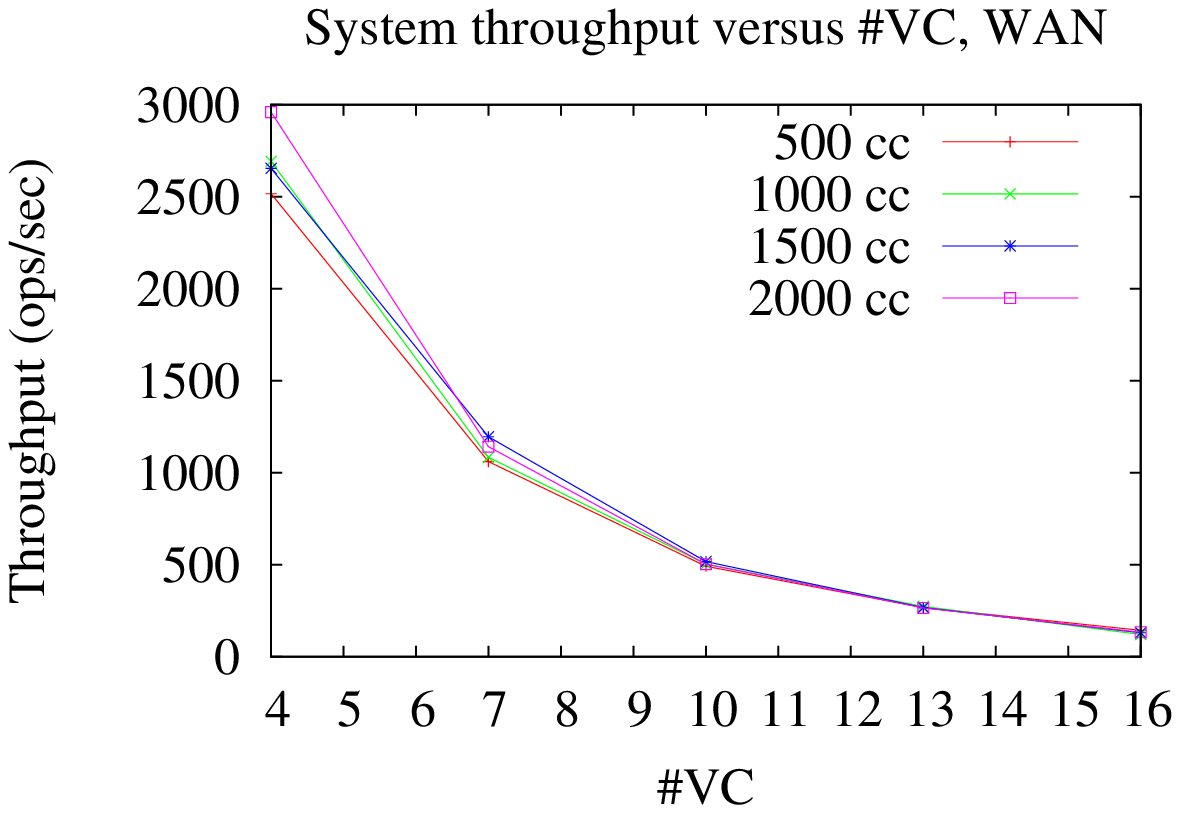}
    \label{fig:WANThroughputVC}
  }
  \subfloat[]
  {
    \includegraphics[width=0.33\textwidth]{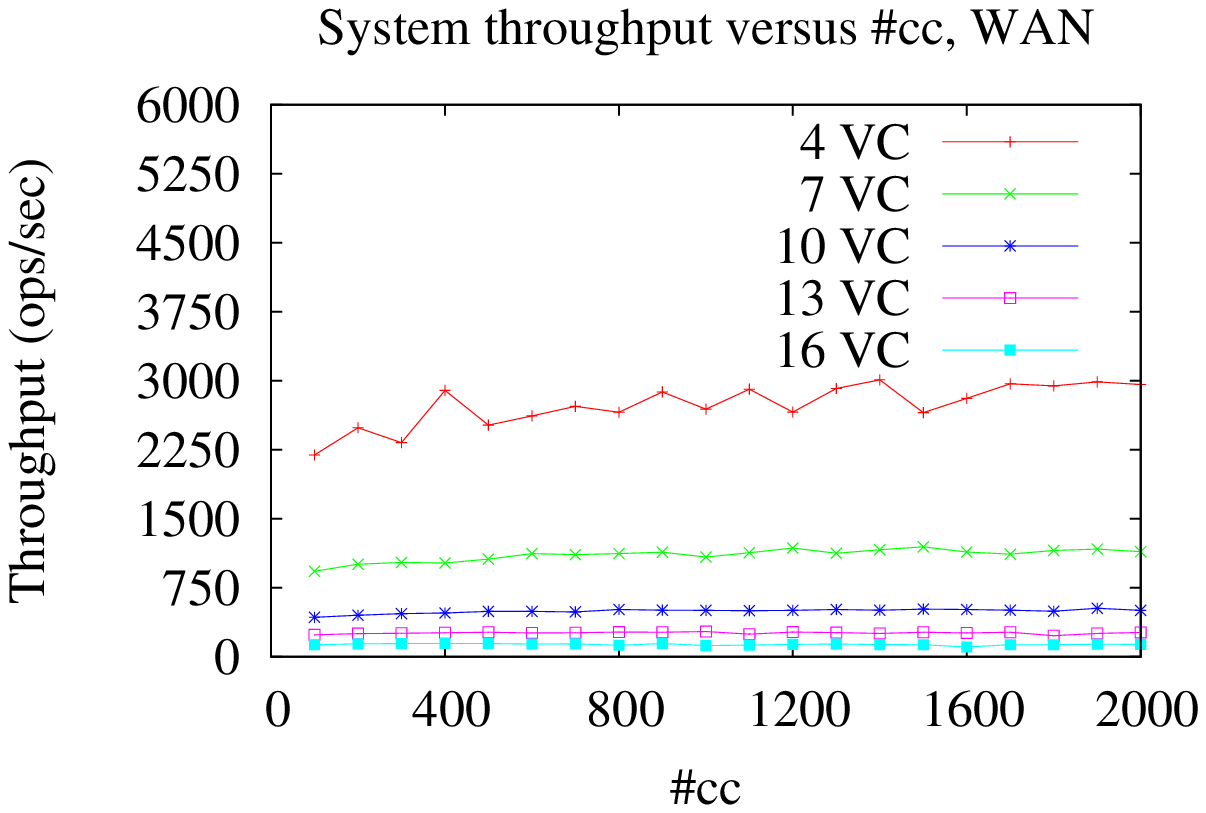}
    \label{fig:WANThroughputCC}
  } 
}
\caption{Latency (\ref{fig:LANResponseTimeVC}, \ref{fig:WANResponseTimeVC}) and throughput graphs
(\ref{fig:LANThroughputVC}, \ref{fig:WANThroughputVC}) of the vote collection algorithm vs. the number of VC nodes. Figures
(\ref{fig:LANThroughputCC} and \ref{fig:WANThroughputCC}) illustrate throughput versus the number of concurrent clients. First
row illustrates LAN setting plots. Second row illustrates WAN setting plots. Election parameters are $n$ = 200,000 and $m$ = 4.}
\label{figure:comparisonfigure}
\end{figure*}
\begin{figure*}[!ht]
\centering
{
  \subfloat[]
  {
    \includegraphics[width=0.33\textwidth]{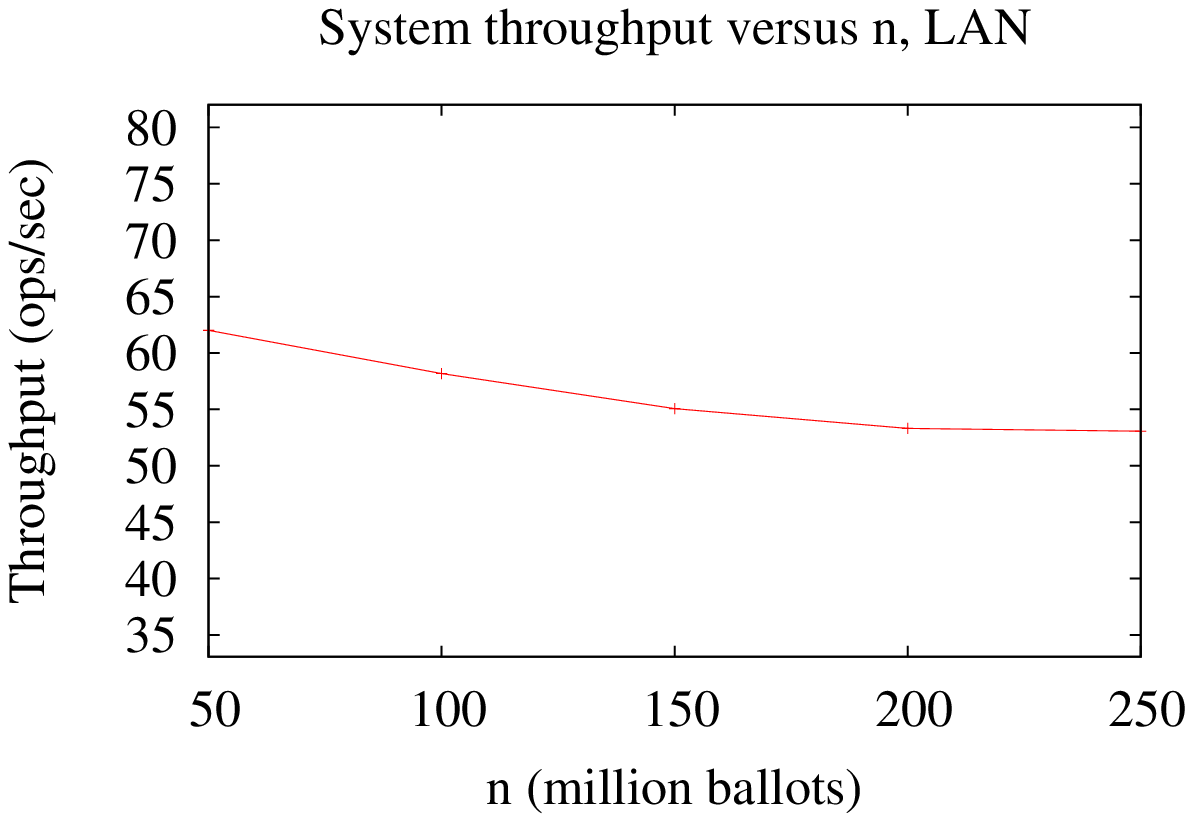}
    \label{fig:LANThroughputn}
  }
  \subfloat[]
  {
    \includegraphics[width=0.33\textwidth]{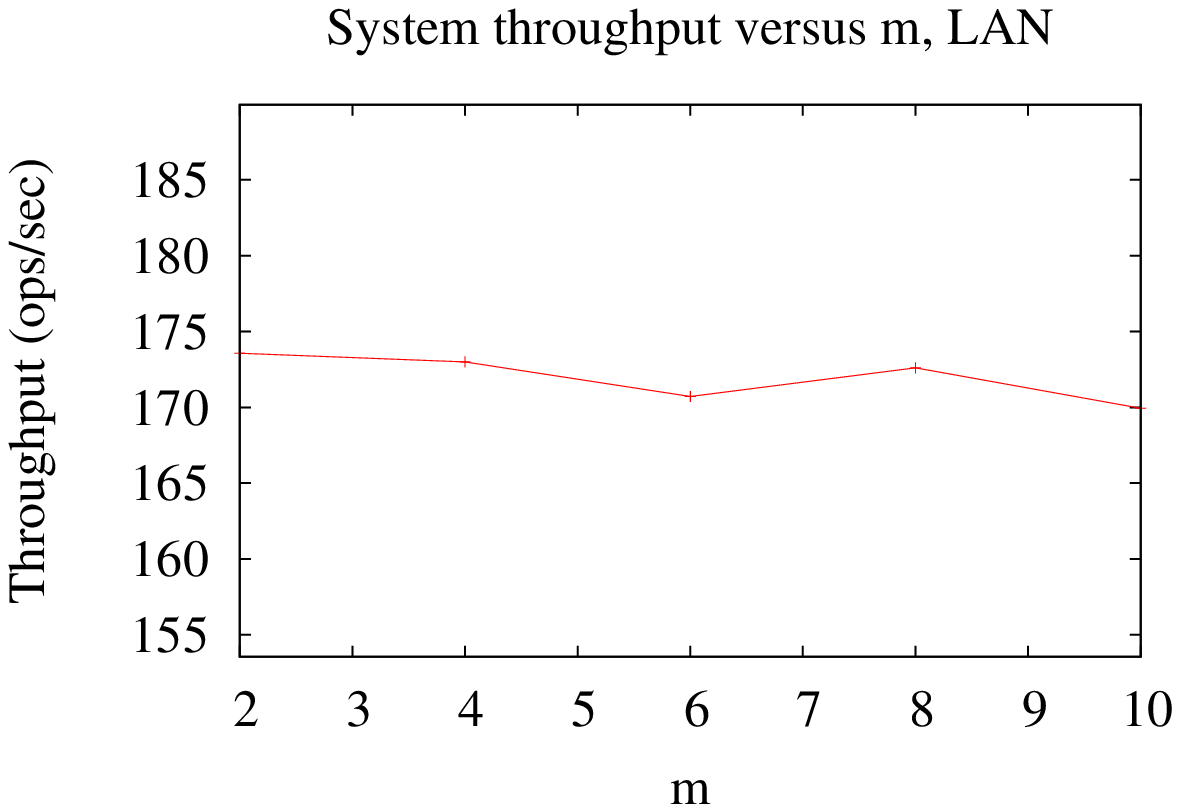}
    \label{fig:LANThroughputm}
  }
  \subfloat[]
  {
    \includegraphics[width=0.33\textwidth]{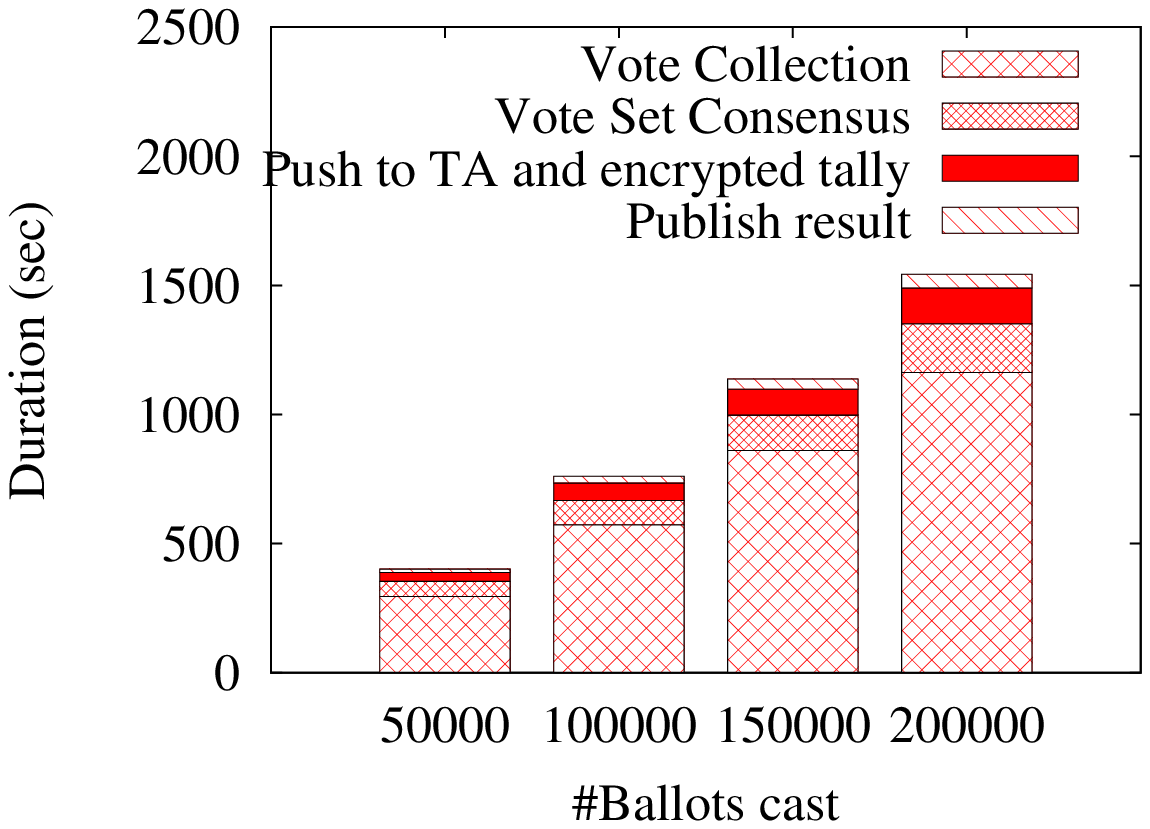}
    \label{fig:OverallLatencyBar}
  }
  \caption{Throughput graphs of the vote collection phase versus the number of total election ballots $n$ (\ref{fig:LANThroughputn}) and the number of total election options $m$ (\ref{fig:LANThroughputm}). A total of 200,000 ballots were cast by 400 concurrent clients on 4 VC nodes. Figure \ref{fig:OverallLatencyBar} illustrates the duration of all system phases. Results depicted are for 4 VCs, $n$ = 200,000 and $m$ = 4. All these plots are for disk based experiments.}
  \label{figure:placeholder}
}
\end{figure*}
We experimentally evaluate the performance of our voting system, focusing mostly on our vote collection algorithm, which is the most performance critical part. 
We conduct our experiments using a cluster of 12 machines, connected over a Gigabit Ethernet switch. 
The first 4 are equipped with Hexa-core Intel Xeon E5-2420 @ 1.90GHz, 16GB RAM, and one 1TB SATA disk, running CentOS 7 Linux, and we use them to run our VC nodes. 
The remaining 8 comprise dual Intel(R) Xeon(TM) CPUs @ 2.80GHz, with 4GB of main memory, and two 50GB disks, running CentOS 6 Linux, and we use them as clients. 
\\\indent 
We implement a multi-threaded voting client to simulate concurrency. 
It starts the requested number of threads, each of which loads its corresponding ballots from disk and waits for a signal to start; from then on, the thread enters a loop where it picks one VC node and vote code at random, requests the voting page from the selected VC (HTTP GET), submits its vote (HTTP POST), and waits for the reply (receipt). 
This simulates multiple concurrent voters casting their votes in parallel, and gives an understanding of the behavior of the system under the corresponding load. 
\\\indent
We employ the PostgreSQL RDBMS~\cite{PostgreSQL} to store all VC initialization data from the EA. 
We start off by demonstrating our system's capability of handling large-scale elections. 
To this end, we generate election data for referendums, i.e., $m=2$, and vary the total number of ballots $n$ from 50 million to 250 million (note the 2012 US voting population size was 235 million). 
We fix the number of concurrent clients to 400 and cast a total of 200,000 ballots, which are enough for our system to reach its steady-state operation. 
Figure~\ref{fig:LANThroughputn} shows the throughput of the system declines slowly, even with a five-fold increase in the number of eligible voters.
\\\indent 
In our second experiment, we explore the effect of $m$, i.e., the number of election options, on system performance.
We vary the number of options from $m=2$ to $m=10$. 
Each election has a total of $n=200,000$ ballots which we spread evenly across 400 concurrent clients. 
As illustrated in Figure \ref{fig:LANThroughputm}, our vote collection protocol
manages to deliver approximately the same throughput regardless of the value of $m$. 
Notice that the only extra overhead $m$ induces during vote collection, is the increase in the number of hash verifications during vote code validation, as there are more vote codes per ballot.
\\\indent 
Next, we evaluate the scalability of our vote collection protocol by varying the number of vote collectors and concurrent clients.
We eliminate the database, by caching the election data in memory and servicing voters from the cache, to measure the net communication and processing costs of our voting protocol.
We vary the number of VC nodes from 4 to 16, and distribute them across the 4 physical machines. 
Note that, co-located nodes are unable to produce vote receipts via local messages only, since the $N_{v}-f_{v}$ threshold cannot be satisfied, i.e., cross-machine communication is still the dominant factor in receipt generation.
For election data, we use the dataset with  $n=200,000$ ballots and $m=4$ options. 
\\\indent 
In Figures \ref{fig:LANResponseTimeVC} and \ref{fig:LANThroughputVC}, we plot the average response time and throughput of our vote collection protocol, versus the number of vote collectors, under various concurrent client scenarios. 
Results illustrate an almost linear increase in the client-perceived latency, for all concurrency scenarios, up to 13 VC nodes. 
From this point on, when four logical VC nodes are placed on a single physical machine, we notice a non-linear increase in latency.
We attribute this to the overloading of the memory bus, a resource shared among all processors of the system, which services all (in-memory) database operations.
\\\indent In terms of overall system throughput, however, the penalty of tolerating extra failures, i.e., increasing the number of vote collectors, manifests early on. 
We notice an almost 50\% decline in system throughput from 4 to 7 VC nodes. 
However, further increases in the number of vote collectors lead to a much smoother, linear decrease. 
We repeat the same experiment by emulating a WAN environment using \emph{netem}~\cite{Netem}, a network emulator for Linux.
We inject a uniform latency of 25ms (typical for US coast-to-coast communication~\cite{coast2coast}) for each network packet exchanged between vote collector nodes, and present our results in Figures \ref{fig:WANResponseTimeVC} and \ref{fig:WANThroughputVC}.
A simple comparison between LAN and WAN plots illustrates our system manages to deliver the same level of throughput and average response time, regardless of the increased intra-VC communication latency.
Finally, in Figures \ref{fig:LANThroughputCC} and \ref{fig:WANThroughputCC}, we plot system throughput versus the number of concurrent clients, in LAN and WAN settings respectively. 
Results show our system has the nice property of delivering nearly constant throughput, regardless of the incoming request load, for a given number of VC nodes.
\\\indent 
Finally, in Figure \ref{fig:OverallLatencyBar}, we illustrate a breakdown of the duration of each phase of the complete voting system (D-DEMOS), versus the total number of ballots cast. 
We assume immediate phase succession, i.e., the vote collection phase ends when all votes have been cast, at which point the vote set consensus phase starts, and so on. 
The ``Push to BB and encrypted tally'' phase is the time it takes for the vote collectors to push the final vote code set to the BB nodes, including all actions necessary by the BB to calculate and publish the encrypted result.  
The ``Publish result'' phase is the time it takes for Trustees to calculate and push their share of the opening of the  final tally to the BB, and for the BB to publish the final tally.   
Note that, in most voting procedures, the vote collection phase would in reality last several hours and even days as stipulated by national law (see Estonia voting system). 
Thus, looking only at the post-election phases of the system, we see that the time it takes to publish the final tally on the BB is quite fast. 
\\\indent 
Overall, although we introduced Byzantine Fault Tolerance across all phases of a voting system (besides setup), we demonstrate it achieves high performance, enough to run real-life elections of large electorate bodies.

\section{Conclusion and future work}
We have presented the world's first complete, state-of-the-art, end-to-end verifiable, distributed voting system with no single point of failure besides setup. 
The system allows voters to verify their vote was tallied-as-intended without the assistance of special software or trusted devices, and external auditors to verify the correctness of the election process. 
Additionally, the system allows voters to delegate auditing to a third party auditor, without sacrificing their privacy.
We provided a model and security analysis of our voting system.
Finally, we implemented a prototype of the integrated system, measured its performance and demonstrated its ability to handle large scale elections.  
\\\indent
As future work, we plan to expand our system to \emph{k-out-of-m} elections.


\Urlmuskip=0mu plus 1mu\relax
\bibliographystyle{IEEEtranS}
\bibliography{MyBibFile}

\end{document}